 \newtheorem{theorem}{Theorem}
  \newtheorem{proposition}{Proposition}
 \newtheorem{lemma}[theorem]{Lemma}
  \newtheorem{corollary}[theorem]{Corollary}
 \newdefinition{remark}{Remark}
  \newdefinition{example}{Example}
 \newproof{proof}{Proof}
\def\tsc#1{\csdef{#1}{\textsc{\lowercase{#1}}\xspace}}
\newcommand{\F}{\mathbb{F}}
\newcommand{\C}{\mathcal{C}}
\newcommand{\Mod}[1]{\ (\operatorname{mod}\ #1)}
\definecolor{darkred}{HTML}{BB0000}
\begin{document}
\let\WriteBookmarks\relax
\def\floatpagepagefraction{1}
\def\textpagefraction{.001}
\shorttitle{Infinite Families of Good Self-Orthogonal Quasi-Cyclic Codes }
\shortauthors{G. Terra Bastos, A. \'Alvarez, C. L. Williams}

\title [mode = title]{A Construction of Infinite Families of Good Self-Orthogonal Quasi-Cyclic Codes}                      

\tnotetext[1]{This document is the result of a research
   project funded by FAPEMIG RED-00133-21 and AMS-Simons Research Enhancement Grants for PUI Faculty.}


\author[1]{Gustavo {Terra Bastos}}[type=editor,
                        auid=000,bioid=1,
                        prefix=,
                        role=,
                        orcid=0000-0002-8263-5119]
\cormark[1]
\fnmark[1]
\ead{gtbastos@ufsj.edu.br}
\affiliation[1]{organization={Department of Mathematics and Statistics, Federal University of S\~ao Jo\~ao del-Rei},
                addressline={170 Frei Orlando Sq}, 
                city={S\~ao Jo\~ao del-Rei},
                postcode={36307-352}, 
                state={Minas Gerais},
                country={Brazil}}

\author[2]{Angelynn \'Alvarez}[type=editor,
                        auid=000,bioid=1,
                        prefix=,
                        role=,
                        orcid=0000-0003-0208-2831]
\ead{alvara44@erau.edu}
\affiliation[2]{organization={Department of Mathematics, Embry-Riddle Aeronautical University},
                addressline={3700 Willow Creek Rd.}, 
                city={Prescott},
                postcode={86301}, 
                state={Arizona},
                country={United States}}

\author[2]{Cameron L. Williams}[type=editor,
                        auid=000,bioid=1,
                        prefix=,
                        role=,
                        orcid=0000-0002-2025-7394]
\ead{willc187@erau.edu}

\cortext[cor1]{Corresponding author}


\begin{abstract}
Quasi-cyclic codes have been recently employed in the constructions of quantum error-correcting codes. In this paper, we propose a construction of infinite families of quasi-cyclic codes over $\F_q$ which are self-orthogonal with respect to the Euclidean and Hermitian inner products. In particular, their dimension and a lower bound for their minimum distance are computed using their constituent codes defined over field extensions of $\mathbb{F}_q$. We also show that the lower bound for the minimum distance satisfies the square-root-like lower bound and also show how dual-containing and self-dual quasi-cyclic codes can arise from our construction. Using the CSS construction, we show the existence of quantum error-correcting codes with good parameters.
\end{abstract}

\begin{keywords}
quasi-cyclic codes \sep self-dual codes \sep self-orthogonal codes \sep 
\sep square-root bound \sep quantum error correcting codes
\end{keywords}

\maketitle

\section{Introduction}

In the study of infinite families of codes, the growth rate of the minimum distance with respect to the code length is an important measure of performance. The \textit{square-root bound} is a useful parameter for measuring how good a classical error-correcting code is; namely, given $n$ and $d$ the code length and the minimum distance of a code $C$, we say $C$ is a \textit{good code} if $d\ge \sqrt{n}.$ This definition was extended to the \textit{square-root-like bound} when there is a positive constant $c$ such that $d\ge c\sqrt{n}$.  Codes satisfying square-root-like bounds are desirable as they exhibit nontrivial distance growth while maintaining strong algebraic structure. Such bounds guarantee that the error-correcting capability of the codes increases with the code length. Moreover, the square-root bound frequently arises as a natural benchmark for structured algebraic code families, where achieving linear distance growth is often difficult. The construction of infinite families of codes satisfying such bounds has attracted significant attention, and for instance, the various codes constructed in \cite{Chen2023FourIF, squarerootbound, 28fcd52cbe024304880942df5f3617ce, WLZZ25} satisfy the square-root-like bound.

Beyond minimum distance considerations, structural properties of codes such as self-orthogonality, dual-containment, and self-duality play important roles in coding theory. Self-orthogonal codes arise in several areas, including lattice theory \cite{W98}, row-self-orthogonal matrices \cite{Massey92, Massey98}, and linear complementary dual codes \cite{Massey98}. Self-dual codes also exhibit deep connections with group theory \cite{Hoehn00,JLLX13}, lattices \cite{SloaneChapter99}, and design theory \cite{Sole90}, and are known to be asymptotically good \cite{MACWILLIAMS1972153}. The construction of codes that simultaneously satisfy the square-root-like bound together with certain duality conditions are also of interest. For instance, the authors in~\cite{infiniteselfdualfamilies} provide infinite families of (Euclidean and Hermitian) self-dual cyclic codes satisfying the square-root bound, and in~\cite{goodsocycliccodes}, the authors explore constructions of self-orthogonal cyclic codes whose minimum distances are larger than the square-root bound. Moreover, classical codes with self-orthogonal or dual-containing properties are useful in the construction of quantum error-correcting codes through celebrated work in \cite{Calderbank-Shor1996, Steane1996}, known as the CSS construction. Because of these connections and applications, determining the existence and constructions of self-orthogonal, dual-containing, and self-dual classical error-correcting codes is an important problem in coding theory.

According to~\cite{infiniteselfdualfamilies}, there are very few constructions of infinite families of self-dual codes whose minimum distances meet the square-root lower bound, which are depicted in that paper. To the best of our knowledge, the authors of~\cite{infiniteselfdualfamilies} in 2025 were the first to present infinite families of non-binary self-dual cyclic codes achieving a square-root-type lower bound on their minimum distances. Additionally, \cite{Calderbank} provides a distinct construction of an infinite family of self-dual quasi-cyclic codes.

Quasi-cyclic (QC) are have been suitable for constructing structured families of codes with desirable distance and duality properties. QC codes are a natural generalization of cyclic codes and are well-studied in the classical coding theory literature. They possess a rich algebraic structure~\cite{seguin,cem,lingsole} and have strong connections with convolutional codes \cite{SolomonTilborg}. Moreover, there exist QC codes that are asymptotically good \cite{longcodes}.  Very recently, some constructions of self-dual, dual-containing, and self-orthogonal QC codes have been explored to develop effective quantum error-correcting codes. For example in~\cite{Abdukhalikov-Bag-Panario2023, GalindoHernando,benjwal}, the authors obtained quantum codes with good parameters by studying the generators of the dual of a QC code to determine the conditions where such codes are dual-containing, self-orthogonal, or self-dual. As these properties are intrinsically related to the CSS construction, QECCs with good parameters are obtained.  

In this paper, we propose a construction of infinite families of self-orthogonal, dual-containing, and self-dual QC codes using their constituent code representation described in \cite{lingsole}, and whose lower bound for their minimum distances are determined using their concatenated description described in \cite{jensen, OO24}. We then show that the codes in our construction yield minimum distances that satisfy a square-root-like lower bound. We also explore the existence of QECCs with good parameters arising from these constructions. The paper is organized as follows: Section~\ref{preliminaries} provides background results related to QC codes and their constituent codes, and also includes the foundations of constructing QECCs from classical codes. In Section~\ref{mainsection}, we present the main results of this paper, namely the use of constituent codes of QC codes to construct infinite families of QC codes that are self-orthogonal (dual-containing, or self-dual) with respect to the Euclidean inner product. In Section~\ref{section:sqrtbound}, we show that minimum distances of such codes satisfy the square-root-like lower bound. We then introduce the notion of Galois closed codes in Section~\ref{Galois-Closed-Section} to obtain QC codes that are self-orthogonal (dual-containing, or self-dual)  with respect to the Hermitian inner product. Lastly, in Section~\ref{QECC_constructions}, we show how some codes from our construction yield quantum error-correcting codes with good parameters using the CSS construction. All computations throughout this paper have been done using \textsc{Magma, SageMath}, and \textsc{Mathematica}.

\color{black}

\section{Preliminaries}\label{preliminaries}

 Let $\F_q$ be a finite field with $q$ elements, where $q = p^t$, for $p$ a positive prime and $t \in \mathbb{Z}^+$. An $[n,k,d]_q$-\textit{linear code }$C$ is a $k$-dimensional $\mathbb{F}_q$-vector subspace of $\mathbb{F}_q^{n}$, where the minimum weight of nonzero codewords is $d$. In this work, only linear codes are presented, so we will refer to them as \textit{codes} only. The relationship between the parameters $n,k$ and $d$ is the following Singleton bound
\begin{equation}\label{singleton}
d \leq n - k +1
\end{equation}
and the code is called a \textit{maximum distance separable} code, or an MDS code, if equality in \eqref{singleton} is achieved. A quintessential example of a maximum distance separable code is the Reed-Solomon code described in \cite{RS1958}. More generally, if we choose nonzero elements $v_1,...,v_n \in \F_q$ and distinct elements $\alpha_1,...,\alpha_n \in \F_q$, we define the Generalized Reed-Solomon (GRS) Codes as
\begin{equation}\label{GRS}
GRS_{n,k}(\mathbf{\alpha}, \mathbf{v}) = \{ (v_1f(\alpha_1), ...,v_nf(\alpha_n)) \mid f(x) \in \F_q [x]_k\}
\end{equation}
where $\textbf{v}=(v_1,...,v_n)$, $\textbf{$\alpha$}=(\alpha_1,...,\alpha_n)$, and $\F_q [x]_k$ is the set of polynomials in $\F_q [x]$ of degree less than $k$. This family of codes is another example of an MDS code that will be useful in Example~\ref{exinffamilysqroot}.

This paper focuses on quasi-cyclic codes, which are generalizations of cyclic codes. More formally, a \textit{quasi-cyclic (QC) code} of index $\ell$ is a code $\C \subset \mathbb{F}_{q}^{m\ell}$ where for all $\textbf{c} \in \C$, $T^\ell(\textbf{c}) \in \C$, where $T$ is the standard shift operator on $\mathbb{F}_{q}^{m\ell}$ and $\ell$ is the smallest positive integer with this property.  
If $\C$ is a QC code in $\mathbb{F}_{q}^{m\ell}$, then any codeword $\textbf{c} \in \C$ can be represented as an $m \times \ell$ array
\begin{equation}\label{codeword}
\textbf{c}=\begin{pmatrix}
c_{00}& c_{01}&...&c_{0,\ell-1} \\ 
c_{10}& c_{11}&...&c_{1,\ell-1} \\ 
\vdots&\vdots&\ddots&\vdots\\ 
c_{m-1,0}& c_{m-1,1}&...&c_{m-1,\ell-1} 
\end{pmatrix}
 \end{equation}
where the action of the operator $T^{\ell}$ is equivalent to a row shift in~\eqref{codeword}.


To describe the constituent code representation of a QC code in \cite{lingsole}, we consider the ring $R_{q,m} ^{\ell}:=\left(\mathbb{F}_{q}[x] / \langle x^m-1 \rangle\right)^{\ell}$, where $\operatorname{gcd}(m,q)=1$. To any codeword $\textbf{c}$ as in \eqref{codeword}, we associate the following element of $R_{q,m} ^{\ell}$:
 \begin{equation}\label{polynomial_rep}
\vec{\textbf{c}}(x) := \left( c_0(x), c_1(x),...,c_{\ell-1}(x)\right) \in R_{q,m} ^{\ell}
\end{equation}
where for each $0 \leq j \leq \ell-1$,
\begin{equation*}
c_{j}(x):=c_{0j}+c_{1j}x+c_{2j}x^2+\cdots+c_{m-1,j}x^{m-1} \in R_{q,m}.
\end{equation*}

\noindent By Lemma 3.1 in \cite{lingsole}, the map $\phi: \mathbb{F}_{q}^{m\ell} \to R_{q,m}^\ell$ defined by $\phi(\textbf{c})=\vec{\textbf{c}}(x)$ is an $R_{q,m}$-module isomorphism. Hence, any QC code over $\mathbb{F}_{q}$ of length $m\ell$ and index $\ell$ can be viewed as an $R_{q,m}$-submodule of $R_{q,m}^\ell$. 

Additionally, given a polynomial $h(x) = h_n x^n + h_{n-1} x^{n-1} + \cdots + h_0 \in \F_q[x]$, its \textit{reciprocal polynomial} $h^* (x)$ is given by $h^*(x) = x^n h(x^{-1})$. As seen in~\cite{lingsole}, we may decompose $x^m-1 \in \F_q[x]$ into irreducible factors
\begin{equation}\label{decomp}
x^m -1 = \delta \prod_{j=1} ^t g_j (x)g_j ^* (x) \prod_{i=1} ^s f_i (x),
\end{equation}

\noindent where $\delta$ is a non-zero element in $\F_q$, $f_i (x)$ is a self-reciprocal polynomial for each $1\leq i\leq s$, and $g_i^*(x)$ is the reciprocal polynomial of $g_i(x)$, where $g_i(x) \neq g_i^*(x)$, for each $1\leq j\leq t$. With this decomposition and via the Chinese remainder theorem (CRT), the ring $R_{q,m}$ may be written as
\begin{equation}\label{decomposition_of_R}
R_{q,m}=\frac{\mathbb{F}_{q}[x]}{\langle x^m-1 \rangle} \cong \Bigg(\bigoplus_{j=1}^t \bigg(\frac{\mathbb{F}_{q}[x]}{\langle g_j \rangle} \oplus \frac{\mathbb{F}_{q}[x]}{\langle g_j^* \rangle} \bigg)\Bigg) \oplus \bigg(\bigoplus_{i=1} ^s \frac{\mathbb{F}_{q}[x]}{\langle f_i\rangle} \bigg),
\end{equation}
%
where we have suppressed the explicit dependence on $x$ in the $g_j$ and $f_i$ to simplify notation. Let $\alpha$ be an $m$-th primitive root of the unity in an extension of $\F_q$ and let
\[ F_i:=\dfrac{\mathbb{F}_{q}[x]}{\langle f_i\rangle} \cong \mathbb{F}_{q} (\alpha^{u_i}), \quad G'_j:=\dfrac{\mathbb{F}_{q}[x]}{\langle g_j\rangle} \cong \mathbb{F}_{q} (\alpha^{v_j}), \quad \text{and} \quad G''_j:=\dfrac{\mathbb{F}_{q}[x]}{\langle g^{*}_j\rangle} \cong \mathbb{F}_{q} (\alpha^{v^{*} _j}).\] 
for each $1 \leq i \leq s$ and $1 \leq j\leq t$. Hence,
\begin{align}\label{decomposition_of_R^ell}
R_{q,m}^\ell & \cong \Bigg(\bigoplus_{j=1}^t\left(\frac{\F_{q} [x]}{\langle g_j \rangle} \oplus \frac{\F_{q} [x]}{\langle g_j^* \rangle} \right)^{\ell}\Bigg) \oplus \left( \bigoplus_{i=1}^s \left(\frac{\F_{q} [x]}{\langle f_i \rangle} \right)^{\ell}  \right) \\
& \cong  \Bigg(\bigoplus_{j=1}^t (G'_j \oplus G''_j )^\ell\Bigg) \oplus \bigg(\bigoplus_{i=1} ^s F_{i}^\ell \bigg) .
\end{align}
Define codes $C_i \subseteq F_i^\ell$, $C'_j \subseteq G_j^ {' \ell}$, and $C''_j \subseteq G_j ^{'' \ell}$ for $1\leq i \leq s$ and $1 \leq j \leq t$. We call these codes the \textit{constituent codes} of the QC code $\mathcal{C}$ which has the following CRT decomposition:
\begin{equation}\label{CRTdecompcode}
\C \cong \Bigg(\bigoplus_{j=1}^t (C_j ' \oplus C_j '')\Bigg) \oplus \bigg(\bigoplus_{i=1} ^s C_i \bigg).
\end{equation}

The following lemma gives an easy way to compute the dimension of a QC code $\C$ from its constituent codes. We slightly modified the notation of the following result from \cite{BastosAlvarezFloresSalerno} to align with the notation in this paper.

\begin{lemma}\cite[Lemma 2]{BastosAlvarezFloresSalerno}\label{dimensionlemma}
Let $\C\subset \mathbb{F}_q ^{m\ell}$ be an $[m\ell,k]$ QC code and let $C_{\mu}\subseteq  \left(\frac{\F_q [x]}{\langle b_j\rangle}\right)^{\ell}$ be its respective constituent $[\ell, k_{\mu}]$ codes. If the nontrivial constituent codes of $\C$ are given by $C_1, \ldots, C_h$, then
$\displaystyle k = \sum_{\mu=1}^h k_\mu \deg(b_\mu). $
\end{lemma}
Moreover, if the nonzero constituent codes of $\C$ are denoted by $C_1, C_2,...,C_h$, we assume from now on that their minimum distances satisfy the following \textit{descending-chain condition}
\begin{equation}\label{chaincond}
d(C_1)\geq d(C_2) \geq \cdots \geq d(C_h),
\end{equation}
which will be utilized when computing a lower bound for the minimum distance of $\C$.

\color{black}

\subsection{The Cyclic Code Associated to a QC Code and a Lower Bound for its Minimum Distance}

Let $s \in \mathbb{Z}_m$ and let $C_s = \{s, sq, sq^2, ... ,sq^{r-1}\}\subset \mathbb{Z}_m$ be the $q$-cyclotomic coset of $s$, where $r$ is the smallest positive integer so that $sq^r\equiv s \mod m$. It is important to note that the $q$-cyclotomic cosets split $\mathbb{Z}_m.$ This classic concept is seen in constructions of cyclic codes, as well as in the trace representation of QC codes in ~\cite{lingsole}.

Let $C_1 \subset \F_q ^{\ell}(\alpha^{u_{a_1}}),...,C_h \subset \F_q ^{\ell} (\alpha^{u_{a_h}})$ be the nonzero constituent codes of an $\ell$-QC code $\C\subset \F_q ^{m\ell}$. By~\cite[Lemma 4.1]{cem}, the trace-formula described in \cite[Theorem 5.1]{lingsole} may be taken from an specific finite extension field $\mathbb{F}_{q^w}$, where $\mathbb{F}_{q} (\alpha^{u_{a_i}}) \subset \mathbb{F}_{q^w}$, for all $1\leq i\leq h$, given the existence of elements $w_i \in \mathbb{F}_{q} (\alpha^{u_{a_i}})$ so that $\operatorname{Tr}_{ \mathbb{F}_{q^w} / \mathbb{F}_{q} (\alpha^{u_{a_i}})} (w_i) =1$. In this case, when taking $\C$ as a collection of arrays, the columns of any codeword $\textbf{c} \in \C$ are codewords of an $m$-length cyclic code $D:=D_{1,2,..,h}$, whose dual's basic zero set is
$$BZ(D^{\perp})=\{ \alpha^{-u_{a_{i}}} \mid i=1,2,...,h\}.$$ 
Recall that if $D$ is a cyclic code of length $m$ over $\mathbb{F}_q$ and $\alpha$ is a primitive $m$-th root of unity, the basic zero set of $D^\perp$ consists of those powers $\alpha^{-u}$ at which every codeword polynomial of $D^\perp$ vanishes. Equivalently, it is the set of roots of the generator polynomial of $D^\perp$, described via representatives of their $q$-cyclotomic cosets modulo $m$.
Just as in~\cite{BastosAlvarezFloresSalerno}, we call $D$ the \emph{cyclic code associated to $\C$}. Let $I=\{i_1,i_2,...,i_t\}$ be a nonempty subset of $\{1,2,...,h\}$ where $$1 \leq i_1 < i_2 < \cdots < i_t \leq h.$$ Define $D_{I}:=D_{i_1,...,i_t} \subset D$ to be the cyclic subcode of $D$ whose dual's basic zero set is
\begin{equation*}\label{bz}
BZ(D_{I}^{\perp})=\{ \alpha^{-u_{a_i}} \mid i \in I\}
\end{equation*}
so that $D_I=D_{i_1} \oplus D_{i_2} \oplus \cdots \oplus D_{i_t}$ and $d(D_I) \geq d(D)$ for each $I \subset \{1,2,...,h\}$.

The following result uses Jensen’s concatenated description of QC codes \cite[Theorem 2.1]{OO24}, and their associated cyclic codes and subcodes to establish a lower bound on the minimum distance of a QC code. This lower bound will be referred to \textit{Jensen's bound} where relevant details can be be found in \cite{jensen, OO24}.





\begin{theorem}\cite[Theorem 4]{jensen}\label{jensenbound}
Given a QC code $\C$ of length $m\ell$ and index $\ell$ over $\F_q$ with the concatenated structure $\displaystyle{\C \cong \bigoplus_{i=1} ^h D_i \square C_i },$ and $d(C_1) \geq d(C_2) \geq ...\geq d(C_h)$, we have
\begin{equation}\label{jensen_lowerbound_equation}
d(\C) \geq d_{\text{J}}(\C):=\min\{N_h , N_{h-1,h}, \ldots, N_{1,2,\ldots,h}\},
\end{equation}
where
\begin{align*}
    N_h (\C) & = d(C_h) d(D_h),\\
    N_{h-1,h} (\C) & = d(C_{h-1})d(D_{h-1,h}), \\
    & \vdots \\
    N_{1,2,\ldots, h}(\C) &= d(C_1) d(D_{1,2,\ldots,h}).
    \end{align*}
\end{theorem}

Using the CRT decomposition~\eqref{CRTdecompcode} and Jensen's concatenated description, we know that a QC code $\C$ can be decomposed as the following direct sum of concatenations

\begin{equation}\label{CRT_Jensen_Decomposition}
\C \cong \left(\bigoplus_{j=1}^t \left(D_j'\square C_j' \oplus D_j''\square C_j'' \right) \right) \oplus \left( \bigoplus_{i=1}^
s D_i \square C_i \right).
\end{equation}
We will apply Theorem~\ref{jensenbound} to the QC code representation in \eqref{CRT_Jensen_Decomposition} to determine a lower bound for the minimum distance for the codes in our QC family in Section~\ref{mainsection}.

\subsection{The Duals of Quasi-Cyclic Codes}

Recently, QC codes have been used in constructions of QECCs where orthogonality conditions on QC codes play a central role \cite{Abdukhalikov-Bag-Panario2023,GalindoHernando,benjwal} . To study the duality of codes in $\mathbb{F}_q^{m\ell}$, we define the following inner products: Let $\textbf{a}, \textbf{b} \in \mathbb{F}_{q}^{m\ell}$ where $\textbf{a}=(a_{ij})$ and $\textbf{b}=(b_{ij})$ where $0 \leq i\leq m-1$ and $0\leq j\le \ell-1$.
The \textit{Euclidean inner product} on $\mathbb{F}_q^{m\ell}$ is
\begin{equation*}
\mathbf{a}\cdot\mathbf{b} = \langle \textbf{a}, \textbf{b} \rangle_{E}= \sum_{i=0}^{m-1} \sum_{j=0}^{\ell-1} a_{ij}b_{ij},
\end{equation*}

\noindent and the \textit{Hermitian inner product} on $\mathbb{F}_{q^t}^{m\ell}$, for even $t$ is
\begin{equation*}
\langle \textbf{a}, \textbf{b} \rangle_{H}= \sum_{i=0}^{m-1} \sum_{j=0}^{\ell-1} a_{ij} b_{ij} ^{\sqrt{q^t}}.
\end{equation*}

\noindent In particular, when $\textbf{a},\textbf{b}\in \F_{q^2} ^{m\ell}$, then $\displaystyle \langle \textbf{a}, \textbf{b} \rangle_{H}= \sum_{i=0} ^{m-1} \sum_{j=0}^{\ell-1} a_{ij} b_{ij} ^{q}$. On $R_{q,m}^\ell$, the \textit{Hermitian inner product} is as follows: for $\textbf{x}=(x_0,x_1,...,x_{\ell-1})$, and $\textbf{y}=(y_0, y_1,...,y_{\ell-1})$,
\begin{equation*}\label{hermitian__IP_R}
\langle \textbf{x}, \textbf{y} \rangle_{H} = \sum_{j=0}^{\ell-1} x_j\overline{y_j},
\end{equation*}
 
\noindent where $\bar{\;}$ denotes the ``conjugation map" on $R_{q,m}$ that sends $x$ to $x^{-1}=x^{m-1}$ and is extended $\F_{q}$-linearly.
%
%
\color{black}
The following proposition is useful for determining when orthogonality in $R_{q,m} ^\ell$ implies orthogonality in $\mathbb{F}_{q}^{m\ell}$.
\begin{proposition}\cite[Proposition 3.2]{lingsole}
Let $\mathbf{a}, \mathbf{b}\in \F_q ^{m\ell}$. Then $(T^{k\ell}(\mathbf{a}))\cdot \mathbf{b}=0$ for all $0\leq k\leq m-1$ if and only if $\langle \phi(\mathbf{a}),\phi(\mathbf{b}) \rangle_H = 0$, where $\phi$ is the aforementioned $R_{q,m}$-module isomorphism.
\end{proposition}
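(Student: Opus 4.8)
The plan is to compute both sides of the claimed equivalence in coordinates and observe that they encode exactly the same family of $m$ scalar conditions. Write $\mathbf{a},\mathbf{b}\in\F_q^{m\ell}$ as the $m\times\ell$ arrays with entries $a_{ij},b_{ij}$ as in \eqref{codeword}, so that $\phi(\mathbf{a})=(a_0(x),\dots,a_{\ell-1}(x))$ with $a_j(x)=\sum_{i=0}^{m-1}a_{ij}x^i$, and similarly $\phi(\mathbf{b})=(b_0(x),\dots,b_{\ell-1}(x))$. Since the conjugation map on $R_{q,m}$ sends $x^i$ to $x^{-i}=x^{m-i}$ and is extended $\F_q$-linearly, we have $\overline{b_j(x)}=\sum_{i=0}^{m-1}b_{ij}x^{m-i}$ in $R_{q,m}$.

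First I would expand the Hermitian inner product on $R^\ell$ from \eqref{hermitian__IP_R}:
\[
\langle\phi(\mathbf{a}),\phi(\mathbf{b})\rangle_H=\sum_{j=0}^{\ell-1}a_j(x)\,\overline{b_j(x)}=\sum_{j=0}^{\ell-1}\sum_{i=0}^{m-1}\sum_{i'=0}^{m-1}a_{ij}\,b_{i'j}\,x^{\,i-i'}\ \in R_{q,m}.
\]
Collecting terms by the exponent $x^k$ with $k$ read modulo $m$, the coefficient of $x^k$ in this element of $R_{q,m}$ is $\sum_{i'=0}^{m-1}\sum_{j=0}^{\ell-1}a_{i'+k,\,j}\,b_{i'j}$, with the first index reduced modulo $m$.

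Next I would identify this coefficient with the Euclidean pairing $T^{k\ell}(\mathbf{a})\cdot\mathbf{b}$. Since $T^\ell$ acts on the array \eqref{codeword} as a cyclic shift of the rows, the $(i,j)$ entry of $T^{k\ell}(\mathbf{a})$ is $a_{i-k,\,j}$ (indices modulo $m$), so that
\[
T^{k\ell}(\mathbf{a})\cdot\mathbf{b}=\sum_{i=0}^{m-1}\sum_{j=0}^{\ell-1}a_{i-k,\,j}\,b_{ij}=\sum_{i''=0}^{m-1}\sum_{j=0}^{\ell-1}a_{i''j}\,b_{i''+k,\,j}
\]
after the substitution $i''=i-k$. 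Comparing with the previous display (and substituting $k\mapsto-k$ there), $T^{k\ell}(\mathbf{a})\cdot\mathbf{b}$ is precisely the coefficient of $x^{-k}=x^{m-k}$ in $\langle\phi(\mathbf{a}),\phi(\mathbf{b})\rangle_H$. Because $k\mapsto -k$ is a bijection of $\mathbb{Z}/m\mathbb{Z}$, the set $\{\,T^{k\ell}(\mathbf{a})\cdot\mathbf{b}:0\le k\le m-1\,\}$ is exactly the set of all coefficients of the element $\langle\phi(\mathbf{a}),\phi(\mathbf{b})\rangle_H\in R_{q,m}$. An element of $R_{q,m}$ is zero iff all of its coefficients vanish, so $\langle\phi(\mathbf{a}),\phi(\mathbf{b})\rangle_H=0$ if and only if $T^{k\ell}(\mathbf{a})\cdot\mathbf{b}=0$ for every $0\le k\le m-1$, which is the assertion.

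The only delicate point is bookkeeping: the direction of the row shift induced by $T^\ell$, the sign of the exponent produced by the conjugation $x\mapsto x^{-1}$, and the reduction of row indices modulo $m$ must all be tracked consistently so that the coefficient of $x^k$ lines up with $T^{k\ell}(\mathbf{a})\cdot\mathbf{b}$ up to the harmless relabeling $k\mapsto -k$. Once these conventions are pinned down the argument is a direct computation, so I do not expect a substantive obstacle beyond getting the indices right; alternatively one may phrase the middle step as the standard observation that the cyclic correlations of the rows of $\mathbf{a}$ and $\mathbf{b}$ are the coefficients of the product of the associated polynomials with one factor conjugated.
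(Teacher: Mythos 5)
Your computation is correct: the paper itself cites this result from Ling--Sol\'e without reproducing a proof, and your coordinate expansion showing that $T^{k\ell}(\mathbf{a})\cdot\mathbf{b}$ is the coefficient of $x^{-k}$ in $\sum_j a_j(x)\overline{b_j(x)}$ is exactly the standard argument from the original source. The observation that the relabeling $k\mapsto -k$ permutes $\mathbb{Z}/m\mathbb{Z}$, so vanishing of all the shifted Euclidean pairings is equivalent to vanishing of all coefficients of the Hermitian product in $R_{q,m}$, correctly handles the only delicate bookkeeping point.
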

%
%
Let $\C$ be a QC code in $\mathbb{F}_{q}^{m\ell}$ and define its \textit{Euclidean dual} $\C^{\perp_E}$ by
\begin{equation*}
\C^{\perp_E} := \{ \textbf{x} \in \mathbb{F}_{q}^{m\ell} \mid \langle \textbf{x}, \textbf{c} \rangle_E =0, \forall \; \textbf{c} \in \C \}.
\end{equation*}

\noindent Similarly, the \textit{Hermitial dual}, $\C^{\perp_H}$ of $\C$ is given by
\begin{equation*}
\C^{\perp_H} := \{ \textbf{x} \in \mathbb{F}_{q^t}^{m\ell} \mid \langle \textbf{x}, \textbf{c} \rangle_H =0, \forall \; \textbf{c} \in \C \}.
\end{equation*}

\noindent We say that $\C$ is \textit{self-orthogonal} with respect to the Euclidean inner product (or ESO) if $\C \subseteq \C^{\perp_{E}}$, \textit{dual-containing} with respect to the Euclidean inner product (or EDC) if $\C^{\perp_{E}} \subseteq \C$, and \textit{self-dual}  with respect to the Euclidean inner product (or ESD) if $\C^{\perp_{E}} = \C$. Analogously, $\C$ is \textit{self-orthogonal} (HSO), \textit{dual-containing} (HDC), and \textit{self-dual} (HSD) with respect to the Hermitian inner product if $\C \subseteq \C^{\perp_{H}}, \C^{\perp_{H}} \subseteq \C$, and $\C=\C^{\perp_{H}}$, respectively.

The following result, which we slightly modified to better align with the notation in this paper, allows us to characterize the duality of QC codes using constituent codes and will be used throughout this paper to determine the duality of the QC codes in our code construction in Section~\ref{mainsection}.

\begin{theorem}\cite[Theorem 4.2]{lingsole}\label{lingsole} An $\ell$-quasi-cyclic code $\C$ of length $m\ell$ over $\mathbb{F}_{q}$ is self-dual with respect to the Euclidean inner product if and only if
\begin{equation}\label{lingsole-decomposition1}
\C\cong \left( \bigoplus_{j=1}^{t} ( C_j' \oplus (C'_{j})^{\perp_E})\right) \oplus \left( \bigoplus_{i=1}^s C_i \right) 
\end{equation}
where for $1 \leq i \leq s$, $C_i$ is a self-dual code over $F_i$ of length $\ell$ (with respect to the Hermitian inner product) and, for $1 \leq j \leq t$, $C'_j$ is a code of length $\ell$ over $G'_j$ and $C_j ^{''} = (C'_j)^{\perp_E}$ is its dual with respect to the Euclidean inner product over $G_j^{''}$.
\end{theorem}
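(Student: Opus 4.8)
The plan is to derive the decomposition from the CRT structure of $R_{q,m}^\ell$ together with the behavior of the Euclidean dual under the isomorphism $\phi$. First I would recall from Corollary 3.3 that $\C$ is Euclidean self-dual in $\F_q^{m\ell}$ if and only if $\phi(\C)$ is Hermitian self-dual in $R_{q,m}^\ell$ with respect to the inner product \eqref{hermitian__IP_R}. So the entire problem reduces to understanding Hermitian self-duality inside $R_{q,m}^\ell \cong \bigl(\bigoplus_{j=1}^t (G'_j \oplus G''_j)^\ell\bigr) \oplus \bigl(\bigoplus_{i=1}^s F_i^\ell\bigr)$, using the CRT decomposition \eqref{CRTdecompcode} of $\C$ into constituent codes $C_i \subset F_i^\ell$, $C'_j \subset G_j^{'\ell}$, $C''_j \subset G_j^{''\ell}$.

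Next I would analyze how the conjugation map $\bar{\;}$ (sending $x \mapsto x^{-1}$) acts on the CRT components. The key algebraic fact is that conjugation sends the ideal $\langle f_i \rangle$ to itself (since $f_i$ is self-reciprocal), giving an automorphism of $F_i$, whereas it swaps $\langle g_j \rangle$ and $\langle g_j^* \rangle$, hence induces an isomorphism $G'_j \to G''_j$. Therefore, under $\phi$, the Hermitian form on $R_{q,m}^\ell$ decomposes as an orthogonal sum: on each $F_i^\ell$ it restricts to a Hermitian-type form (the one making ``Hermitian self-dual over $F_i$'' meaningful), while the pair $(G_j^{'\ell}, G_j^{''\ell})$ is a hyperbolic-type pairing where $G_j^{'\ell}$ and $G_j^{''\ell}$ are dual to each other under the Euclidean form. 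Concretely, I would show that for $\phi(\C) = \bigoplus_j (C'_j \oplus C''_j) \oplus \bigoplus_i C_i$, the orthogonal complement is $\phi(\C)^\perp = \bigoplus_j ((C''_j)^{\perp_E} \oplus (C'_j)^{\perp_E}) \oplus \bigoplus_i C_i^{\perp_H}$, where on the $g$-pair the dual of $C'_j$ (living over $G'_j$) naturally lands in $G''_j$ and vice versa.

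From there, setting $\phi(\C) = \phi(\C)^\perp$ componentwise (the CRT summands are pairwise orthogonal, so self-duality holds iff it holds in each component) yields exactly the stated conditions: $C_i = C_i^{\perp_H}$ for each $i$, and for each $j$, matching the $G'_j$-component gives $C'_j = (C''_j)^{\perp_E}$, equivalently $C''_j = (C'_j)^{\perp_E}$, which is \eqref{lingsole-decomposition1}. The converse is immediate by reversing this reasoning: if the constituents satisfy those relations, assembling them via CRT produces a code equal to its own Euclidean dual. I would also do a quick dimension check using Lemma~\ref{dimensionlemma} to confirm consistency: $\dim \C = \tfrac{m\ell}{2}$ follows since each self-dual $C_i$ has dimension $\ell/2$ over $F_i$ and each pair contributes $\dim C'_j + \dim (C'_j)^{\perp_E} = \ell$ times the relevant degrees.

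The main obstacle is the bookkeeping in the second step: correctly tracking \emph{which} inner product each CRT component inherits and verifying that the conjugation-induced identification $G'_j \cong G''_j$ turns the Hermitian pairing on that block into the Euclidean duality pairing between the two constituent codes (rather than a Hermitian one). This requires being careful about the distinction between the conjugation $\bar{\;}$ on $R$ and the Frobenius-type conjugation implicit in the Hermitian form over the extension field $F_i$, and checking that on the self-reciprocal blocks these coincide so that ``$C_i$ Hermitian self-dual over $F_i$'' is the right condition. Once that compatibility is pinned down, everything else is a routine consequence of CRT orthogonality and Corollary 3.3.
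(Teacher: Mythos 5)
This theorem is quoted in the paper directly from \cite[Theorem~4.2]{lingsole} without a proof, and your outline reconstructs essentially the argument given in that source: reduce Euclidean self-duality of $\C$ to Hermitian self-duality of $\phi(\C)$ via Corollary~3.3, observe that the conjugation $x\mapsto x^{-1}$ fixes each $F_i$ (inducing the Hermitian/Euclidean conjugation there, per Remark~\ref{euclidean_hermitian_remark}) and swaps $G_j'$ with $G_j''$, and then compare $\phi(\C)$ with its dual componentwise using the dual decomposition of Proposition~\ref{fulldecomp}. Your approach is correct and matches the standard proof; the compatibility check you flag as the main obstacle (that the induced conjugation on the self-reciprocal blocks is exactly the Frobenius-type map underlying the Hermitian form, and that the $(G_j',G_j'')$ pairing degenerates to Euclidean duality) is indeed the only nontrivial verification.
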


\begin{remark}
\label{euclidean_hermitian_remark}
\end{remark}
\begin{enumerate}
\item[(i)] As observed in~\cite[page 136]{HKS}, the irreducible factors $f_i$ which are self-reciprocal imply that the cardinality $q_i$ of each field extension $F_i$ is an even power of $q$, for all $i$. However, there are two exceptions: The first one, for all $m$ and $q$, is the field extension coming from the irreducible factor $x-1$ of $x^m-1$. The other exception occurs when $q$ is odd and $m$ is even in which $x+1$ is another self-reciprocal factor of $x^m-1$. In both exceptions, $q_i=q$, so we can equip these fields with the Euclidean inner product. Otherwise, we equip $F_i$ with the Hermitian inner product.
\item[(ii)] When the constituent codes $C_i$ over finite extension fields coming from self-reciprocal factors are self-dual, self-orthogonal, or dual-containing (under the Hermitian or Euclidean inner products), for all $1 \leq i \leq s$, then the respective $\ell$-QC code $\C$ is self-dual, self-orthogonal, or dual-containing, respectively. Therefore, the duality of $\C$ is managed by the duality of its constituent codes over finite extension fields obtained from self-reciprocal polynomials.
 \end{enumerate}
%
\color{black}
The next result extends~\cite[Theorem 4.2]{lingsole} to the dual of a QC code.
\begin{proposition}\cite[Proposition 7.3.5]{HKS}\label{fulldecomp}
Let $\C$ be a QC code with CRT decomposition as~\eqref{CRTdecompcode}. Then its Euclidean dual $\C^{\perp_E}$ is of the form
\begin{equation*}\label{lingsole-decomposition2}
\C^{\perp_E} \cong \left( \bigoplus_{j=1}^{t} ( C_j' \oplus C_{j}^{' \perp_E})\right) \oplus \left( \bigoplus_{i=1}^s C_i ^{\perp_H}\right).
\end{equation*}
\end{proposition}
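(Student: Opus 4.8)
The plan is to derive the dual decomposition directly from Theorem~\ref{lingsole} (the self-dual characterization) together with the CRT structure of $R_{q,m}^\ell$, by treating each constituent slot independently. Recall from~\eqref{decomposition_of_R^ell} that $R_{q,m}^\ell$ splits as an orthogonal direct sum (with respect to the Hermitian inner product on $R_{q,m}^\ell$) of the blocks $(G_j'\oplus G_j'')^\ell$ for $1\le j\le t$ and $F_i^\ell$ for $1\le i\le s$; the key point, which I would first make explicit, is that this CRT decomposition is orthogonal, so the dual of $\C\cong\big(\bigoplus_j(C_j'\oplus C_j'')\big)\oplus\big(\bigoplus_i C_i\big)$ is computed componentwise: $\C^{\perp}$ decomposes as the direct sum of the duals of each piece taken inside the corresponding block.

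Next I would handle the two types of blocks. For a self-reciprocal factor $f_i$, the block $F_i^\ell$ carries the Hermitian form (by Remark~\ref{euclidean_hermitian_remark}, with the understood convention that the Euclidean form is used in the exceptional cases $x-1$ and $x+1$), and the dual of the constituent $C_i\subseteq F_i^\ell$ within this block is exactly $C_i^{\perp_H}$ — this accounts for the $\bigoplus_i C_i^{\perp_H}$ term. For a reciprocal pair $\{g_j,g_j^*\}$, the conjugation map $x\mapsto x^{-1}$ on $R_{q,m}$ interchanges $G_j'$ and $G_j''$, so the Hermitian form on the block $(G_j'\oplus G_j'')^\ell$ pairs $G_j'^\ell$ with $G_j''^\ell$ nondegenerately and is zero on each factor separately. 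Hence for a subcode $C_j'\oplus C_j''$ of this block, its orthogonal complement is $(C_j'')^{\perp_E}\oplus (C_j')^{\perp_E}$, where the Euclidean duals are taken over the field $G_j'$ (identifying $G_j''$ with $G_j'$ via conjugation). Reindexing so that the first summand in each pair is labeled by $g_j$, this gives the term $\bigoplus_j\big(C_j'\oplus C_j^{'\perp_E}\big)$ in~\eqref{lingsole-decomposition2}. I would extract this block-by-block pairing statement as the crux; it is essentially~\cite[Theorem 4.2]{lingsole} with the self-duality hypothesis dropped, and indeed the cleanest route is to observe that Theorem~\ref{lingsole} says $\C=\C^\perp$ forces $C_j''=(C_j')^{\perp_E}$ and $C_i=C_i^{\perp_H}$, and the underlying computation there already identifies the dual of an arbitrary constituent tuple — one just reads off the formula without imposing equality.

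The main obstacle is bookkeeping rather than mathematical depth: one must be careful about the identification $G_j''\cong G_j'$ used to make sense of "$C_j^{'\perp_E}$" sitting in the $G_j''$-slot, and about the exceptional self-reciprocal factors where the form is Euclidean rather than Hermitian (there $C_i^{\perp_H}$ should be read as $C_i^{\perp_E}$, consistent with Remark~\ref{euclidean_hermitian_remark}(i); in fact writing $\perp_H$ uniformly is harmless since over $F_i=\F_q$ the Hermitian and Euclidean forms coincide). Once the orthogonality of the CRT splitting is established, the proof is a direct assembly of these per-block computations, and I would present it compactly by citing Corollary~\cite[Corollary 3.3]{lingsole} to pass between $\C^{\perp_E}$ in $\F_q^{m\ell}$ and $\phi(\C)^{\perp_H}$ in $R_{q,m}^\ell$, then invoking the componentwise behavior of the Hermitian form under~\eqref{decomposition_of_R^ell}.
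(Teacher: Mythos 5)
The paper supplies no proof of Proposition~\ref{fulldecomp}; it is quoted verbatim from \cite[Proposition 7.3.5]{HKS}, so there is no internal argument to compare against. Your proposal reconstructs the standard argument behind that reference, and its core is correct: the conjugation $x\mapsto x^{m-1}$ permutes the CRT components of $R_{q,m}$, fixing each $F_i$ (where it induces the order-two Galois involution, or the identity in the exceptional cases $x\pm 1$ of Remark~\ref{euclidean_hermitian_remark}) and interchanging $G_j'$ with $G_j''$. Hence the $R$-valued Hermitian form of~\eqref{hermitian__IP_R} is block-diagonal across the decomposition~\eqref{decomposition_of_R^ell}, restricts on each $F_i^{\ell}$ to the Hermitian (resp.\ Euclidean) form, and on each pair block vanishes on $(G_j')^{\ell}$ and $(G_j'')^{\ell}$ separately while pairing them nondegenerately. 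Combining this with \cite[Corollary 3.3]{lingsole} to translate between $\C^{\perp_E}\subset\F_q^{m\ell}$ and $\phi(\C)^{\perp_H}\subset R_{q,m}^{\ell}$ gives the componentwise description of the dual. This is the right route, and essentially the only one.

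One step deserves more care than your ``reindexing'' remark gives it. Your per-block computation correctly identifies the dual of $C_j'\oplus C_j''$ inside the $j$-th pair block as $(C_j'')^{\perp_E}\oplus(C_j')^{\perp_E}$, which is what \cite{HKS} actually asserts. The displayed formula~\eqref{lingsole-decomposition2}, whose first summand is $C_j'$ rather than $(C_j'')^{\perp_E}$, is \emph{not} obtained from this by relabeling: the two agree precisely when $C_j''=(C_j')^{\perp_E}$, which holds in the constructions of Section~\ref{mainsection} (and in the self-dual case of Theorem~\ref{lingsole}) but is not implied by the bare hypothesis ``CRT decomposition as~\eqref{CRTdecompcode}'' (e.g.\ $C_j'=C_j''=0$ already breaks it). So the discrepancy lies in the statement as transcribed rather than in your mathematics; the clean fix is to state the dual's pair blocks as $(C_j'')^{\perp_E}\oplus(C_j')^{\perp_E}$ and note that this specializes to the displayed form under the assumption $C_j''=(C_j')^{\perp_E}$ used throughout the paper.
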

Since the duality of $\C$ can be determined from its constituent codes, we will use the following results in \cite{Li} for constructing self-orthogonal codes.

%
\begin{theorem}\cite[Theorem 1]{Li}\label{constbasedonprevioussocode}
\begin{enumerate}
\item[(i)] Suppose $C_1$ and $C_2$ are $[n_1, k , d_1]$ and $[n_2, k ,d_2]$ self-orthogonal codes over $\F_q$, respectively. If $C_1$ and $C_2$ have generator matrices $G_1$ and $G_2$, respectively, then $[G_1 | G_2]$ generates an $[n_1 +n_2 , k , d_1 + d_2]$ self-orthogonal code over $\F_q$.
\item[(ii)] Suppose $C_1$ and $C_2$ are $[n_1 , k , d_1]$ and $[n_2 , k-1, d_2]$ self-orthogonal codes over $\F_q$, respectively. If $C_1$ contains a codeword of weight at least $d_1 +d_2$, then there exists an $[n_1 + n_2, k, d_1 + d_2]$ self-orthogonal code over $\F_q$.
\end{enumerate}
\end{theorem}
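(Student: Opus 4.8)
The plan is to prove both parts by writing down an explicit generator matrix for the claimed code and then verifying, in order, its length, its dimension, its self-orthogonality, and a lower bound on its minimum distance. The one fact used repeatedly is that a code with generator matrix $G$ is self-orthogonal with respect to the Euclidean inner product precisely when $GG^{T}=0$ (by bilinearity it suffices to check pairs of rows).

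For part~(i), put $G:=[\,G_1\mid G_2\,]$ and let $C$ be the code it generates. The length is $n_1+n_2$ by construction. For the dimension, note that any linear dependence among the rows of $G$ restricts in the first $n_1$ coordinates to a linear dependence among the rows of $G_1$; since $G_1$ has full row rank $k$, the dependence is trivial, so $\dim C=k$. For self-orthogonality, block multiplication gives $GG^{T}=G_1G_1^{T}+G_2G_2^{T}=0$, since both $C_1$ and $C_2$ are self-orthogonal. Finally, the bound $d(C)\ge d_1+d_2$ is immediate from Lemma~\ref{constcode} (equivalently: a nonzero codeword is $(\mathbf{x}G_1\mid\mathbf{x}G_2)$ with $\mathbf{x}\neq 0$, and $\mathbf{x}G_1,\mathbf{x}G_2$ are both nonzero since $G_1,G_2$ have full row rank, so its weight is $\operatorname{wt}(\mathbf{x}G_1)+\operatorname{wt}(\mathbf{x}G_2)\ge d_1+d_2$).

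For part~(ii), fix a codeword $\mathbf{c}\in C_1$ with $\operatorname{wt}(\mathbf{c})\ge d_1+d_2$. Since any nonzero vector extends to a basis, after elementary row operations on $G_1$ (which produce another generator matrix of $C_1$) we may assume $G_1=\begin{pmatrix}G_1'\\ \mathbf{c}\end{pmatrix}$ with $G_1'$ of size $(k-1)\times n_1$; note that $G_2$ has size $(k-1)\times n_2$ because $\dim C_2=k-1$. Set
\[
G:=\begin{pmatrix}G_1' & G_2\\ \mathbf{c} & \mathbf{0}\end{pmatrix},
\]
a $k\times(n_1+n_2)$ matrix, and let $C=\langle G\rangle$. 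Its length is $n_1+n_2$; its dimension is $k$ because the rows of $G$ restrict in the first $n_1$ coordinates to the (independent) rows of $G_1$. For self-orthogonality, expand $GG^{T}$ in blocks: the top-left block is $G_1'(G_1')^{T}+G_2G_2^{T}=0$; the bottom-right entry is $\mathbf{c}\mathbf{c}^{T}=\langle\mathbf{c},\mathbf{c}\rangle_E=0$ since $\mathbf{c}\in C_1$; and the off-diagonal blocks $G_1'\mathbf{c}^{T}$ and $\mathbf{c}(G_1')^{T}$ have as entries the Euclidean inner products of $\mathbf{c}$ with the rows of $G_1'$, all of which vanish because these words lie in the self-orthogonal code $C_1$. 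Hence $GG^{T}=0$. For the minimum distance, a general codeword is $(\lambda\mathbf{c}+\mathbf{z}G_1'\mid \mathbf{z}G_2)$ with $\lambda\in\F_q$ and $\mathbf{z}\in\F_q^{k-1}$: if $\mathbf{z}=\mathbf{0}$ and $\lambda\neq 0$ its weight is $\operatorname{wt}(\mathbf{c})\ge d_1+d_2$; if $\mathbf{z}\neq\mathbf{0}$ then $\mathbf{z}G_2$ is a nonzero word of $C_2$ (weight $\ge d_2$) and $\lambda\mathbf{c}+\mathbf{z}G_1'$ is nonzero---because $\{\mathbf{c}\}$ together with the rows of $G_1'$ is a basis of $C_1$---hence a nonzero word of $C_1$ (weight $\ge d_1$), so the total weight is again $\ge d_1+d_2$.

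I expect the only real subtlety to be the last step of part~(ii): the degenerate case $\mathbf{z}\neq\mathbf{0}$ with $\lambda\mathbf{c}+\mathbf{z}G_1'=\mathbf{0}$ would yield a codeword of weight only $\ge d_2$, and ruling it out is exactly what forces the choice of $\mathbf{c}$ to be a basis vector of $C_1$ (so that $\{\mathbf{c}\}$ together with the rows of $G_1'$ is linearly independent). The hypothesis $\operatorname{wt}(\mathbf{c})\ge d_1+d_2$ then does the rest, covering the complementary case $\mathbf{z}=\mathbf{0}$. All remaining steps---full-row-rank bookkeeping for the dimension and block-matrix arithmetic for $GG^{T}=0$---are routine.
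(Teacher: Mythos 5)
Your proof is correct. Note that the paper itself gives no proof of this statement --- it is imported verbatim from the cited reference \cite[Theorem 1]{Li} --- so there is no in-paper argument to compare against; your write-up supplies a complete, self-contained justification. Part (i) is the routine juxtaposition argument ($GG^{T}=G_1G_1^{T}+G_2G_2^{T}=0$ plus Lemma~\ref{constcode}), and in part (ii) your construction $G=\begin{pmatrix}G_1' & G_2\\ \mathbf{c} & \mathbf{0}\end{pmatrix}$, with $\mathbf{c}$ completed to a basis of $C_1$, is exactly the natural (and, as far as I can tell, the original) construction; you correctly identify and dispose of the one genuine subtlety, namely that $\lambda\mathbf{c}+\mathbf{z}G_1'$ cannot vanish when $(\lambda,\mathbf{z})\neq(0,\mathbf{0})$ because the rows of $G$ restrict to a basis of $C_1$ in the first $n_1$ coordinates. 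The only caveat, which you inherit from the statement rather than introduce, is that both arguments establish $d\ge d_1+d_2$ rather than equality; exactness can fail in part (i) if no single message vector simultaneously attains weight $d_1$ in $C_1$ and $d_2$ in $C_2$. Since the paper only ever uses the lower bound (and the $[n,k,d]$ notation is routinely read this way in this context), this is a shared imprecision of the source, not a gap in your proof.
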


In the special case when $C_1 = C_2$, the code with generator matrix $[G_1 | G_1]$ will be denoted as $2C_1$, and such an argument may be naturally extended to $mC_1$, for any positive integer $m$. In this work, for any code $C$, we will refer to the code $mC$ as \textit{$m$-copies of $C$} which will play an important role in the next section when determining the dimension of the QC codes in our infinite family.

\subsection{Constructions of Quantum Error-Correcting Codes from Classical Ones}\label{quantumconstructions}

From \cite{Ketkar-Klappenecker-Kumar-Sarvepalli2006}, the use of classical codes has became one of the standard methods for constructing quantum error-correcting codes. A \textit{quantum error-correcting code} (QECC) $Q$
is a $K$-dimensional subspace of $(\mathbb{C}^q)^{\otimes n}$. If $Q$ has minimum distance $d$, then we say that $Q$ is an $[[n, K, d]]_q$ code. If $K=q^k$, we write $[[n, k, d]]_q$. The length $n$, the dimension $K$, and minimum distance $d$ are the \textit{parameters} of $Q$. 

This paper focuses on the existence of quantum stabilizer codes, where a \textit{stabilizer (quantum) code} $Q\neq \{\textbf{0}\}$ is the common eigenspace of a commutative subgroup of the error group generated by a nice basis of $(\mathbb{C}^q)^{\otimes n}$. The code $Q$ is said to be \textit{pure to d}, or simply \textit{pure}, if and only if its stabilizer group does not contain non-scalar matrices of weight less than its minimum distance $d$. (See Definitions 3.5.7 and 3.5.8 in \cite{LaGuardia2021}.) Moreover, a code $Q$ is said to be \textit{impure} if and only if there are non-identity stabilizer elements of weight less than the minimum distance.
%
%
 In 1996, the following construction was introduced by Calderbank and Shor \cite{Calderbank-Shor1996}, and Steane \cite{Steane1996}, and is notably the most direct link between classical and quantum coding theory. 

\begin{theorem}[CSS construction \cite{Ketkar-Klappenecker-Kumar-Sarvepalli2006}]\label{css-construction} Let $C_1$ and $C_2$ denoted two classical codes with parameters $[n, k_1, d_1]_q$ and $[n, k_2, d_2]_q$ such that $C_2^\perp \subseteq C_{1}$. Then there exists $[[n, k_1+k_2-n, d]]_q$ stabilizer code with minimum distance $d=\min\{ \operatorname{wt}(\mathbf{c}) \mid \mathbf{c} \in (C_1 \backslash C_{2}^\perp) \cup (C_2 \backslash C_{1}^\perp) \}$ that is pure to $\min\{d_1, d_2\}$.
\end{theorem}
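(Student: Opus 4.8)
The plan is to work in the symplectic model of $\F_q$-linear stabilizer codes and realize the asserted code as the one associated to the product code $S := C_1^{\perp}\times C_2^{\perp} := \{(\mathbf a\mid\mathbf b)\in\F_q^{2n} : \mathbf a\in C_1^{\perp},\ \mathbf b\in C_2^{\perp}\}$, where $\F_q^{2n}$ carries the symplectic form $\langle(\mathbf a\mid\mathbf b),(\mathbf a'\mid\mathbf b')\rangle_s=\mathbf a\cdot\mathbf b'-\mathbf a'\cdot\mathbf b$. First I would recall the standard correspondence (as in~\cite{Ketkar-Klappenecker-Kumar-Sarvepalli2006}, see also~\cite{LaGuardia2021}): an $\F_q$-linear code $D\subseteq\F_q^{2n}$ with $D\subseteq D^{\perp_s}$ and $\dim D=n-k$ determines an $[[n,k,d]]_q$ stabilizer code whose minimum distance is the smallest symplectic weight of a vector of $D^{\perp_s}\setminus D$ (when $k\geq 1$), and which is pure to $t$ exactly when $D$ contains no nonzero vector of symplectic weight smaller than $t$. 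The hypothesis $C_2^{\perp}\subseteq C_1$ is equivalent to $C_1^{\perp}\subseteq C_2=(C_2^{\perp})^{\perp}$, i.e. $C_1^{\perp}\perp C_2^{\perp}$; hence for any $(\mathbf a\mid\mathbf b),(\mathbf a'\mid\mathbf b')\in S$ we have $\mathbf a\cdot\mathbf b'=\mathbf a'\cdot\mathbf b=0$, so $S\subseteq S^{\perp_s}$. Since $\dim S=(n-k_1)+(n-k_2)$, the correspondence produces a stabilizer code of length $n$ and dimension $k=n-\dim S=k_1+k_2-n$. Finally, testing $(\mathbf x\mid\mathbf z)$ against vectors of $S$ supported on a single block gives $S^{\perp_s}=(C_2^{\perp})^{\perp}\times(C_1^{\perp})^{\perp}=C_2\times C_1$.

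For the minimum distance, note that $(\mathbf a\mid\mathbf b)\in S^{\perp_s}\setminus S$ iff $\mathbf a\in C_2$, $\mathbf b\in C_1$, and $\mathbf a\notin C_1^{\perp}$ or $\mathbf b\notin C_2^{\perp}$. Using $\operatorname{swt}(\mathbf a\mid\mathbf b)=\lvert\operatorname{supp}(\mathbf a)\cup\operatorname{supp}(\mathbf b)\rvert\geq\max\{\operatorname{wt}(\mathbf a),\operatorname{wt}(\mathbf b)\}$, the case $\mathbf a\notin C_1^{\perp}$ gives $\mathbf a\in C_2\setminus C_1^{\perp}$ and hence $\operatorname{swt}(\mathbf a\mid\mathbf b)\geq\min\{\operatorname{wt}(\mathbf c):\mathbf c\in C_2\setminus C_1^{\perp}\}$, and symmetrically for $\mathbf b\notin C_2^{\perp}$; together these yield $d\geq\delta:=\min\{\operatorname{wt}(\mathbf c):\mathbf c\in(C_1\setminus C_2^{\perp})\cup(C_2\setminus C_1^{\perp})\}$. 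For the reverse inequality, if $\delta=\operatorname{wt}(\mathbf c_0)$ is attained, say, by $\mathbf c_0\in C_2\setminus C_1^{\perp}$ (the case $\mathbf c_0\in C_1\setminus C_2^{\perp}$ being handled by placing it in the second block), then $(\mathbf c_0\mid\mathbf 0)\in C_2\times C_1=S^{\perp_s}$ and $(\mathbf c_0\mid\mathbf 0)\notin S$ because $\mathbf c_0\notin C_1^{\perp}$, so $d\leq\operatorname{swt}(\mathbf c_0\mid\mathbf 0)=\delta$. Thus $d=\delta$, exactly the claimed value.

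For purity, the decisive observation is that $S\subseteq S^{\perp_s}=C_2\times C_1$, so every element $(\mathbf a\mid\mathbf b)$ of $S$ satisfies $\mathbf a\in C_2$ and $\mathbf b\in C_1$. If $(\mathbf a\mid\mathbf b)\neq\mathbf 0$, then either $\mathbf a\neq\mathbf 0$, forcing $\operatorname{wt}(\mathbf a)\geq d_2$, or $\mathbf b\neq\mathbf 0$, forcing $\operatorname{wt}(\mathbf b)\geq d_1$; in either case $\operatorname{swt}(\mathbf a\mid\mathbf b)\geq\min\{d_1,d_2\}$. Hence $S$ has no nonzero vector of symplectic weight below $\min\{d_1,d_2\}$, which is precisely the assertion that the code is pure to $\min\{d_1,d_2\}$.

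I expect the one genuinely delicate point to be the bookkeeping with the equivalent inclusions $C_2^{\perp}\subseteq C_1\Leftrightarrow C_1^{\perp}\subseteq C_2$ — using the ``small'' one to get self-orthogonality of $S$, and the ``large'' one (via $S\subseteq S^{\perp_s}=C_2\times C_1$) to get the distance and purity bounds — together with the correct identification of $S^{\perp_s}$; everything else reduces to the stabilizer correspondence plus the elementary inequality $\operatorname{swt}(\mathbf a\mid\mathbf b)\geq\max\{\operatorname{wt}(\mathbf a),\operatorname{wt}(\mathbf b)\}$. It also remains to dispose of the degenerate case $k=k_1+k_2-n=0$, i.e. $C_1=C_2^{\perp}$, in which $S$ is symplectically self-dual and the construction returns a one-dimensional quantum code.
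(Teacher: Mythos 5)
Your argument is correct, and it is the standard proof of the CSS construction via the symplectic product code $S=C_1^{\perp}\times C_2^{\perp}$: the self-orthogonality, dimension count, identification $S^{\perp_s}=C_2\times C_1$, the two-sided bound on the minimum symplectic weight of $S^{\perp_s}\setminus S$, and the purity estimate are all sound. Note that the paper itself offers no proof of this statement — it is quoted directly from Ketkar--Klappenecker--Kumar--Sarvepalli — so there is nothing internal to compare against; your write-up essentially reconstructs the argument from that reference, and the only loose end is the degenerate case $k_1+k_2=n$ (where $S=S^{\perp_s}$ and the displayed minimum is over an empty set), which you correctly flag but should dispatch explicitly by falling back on the convention for $[[n,0,d]]_q$ codes.
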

%
%
In the special case when $C_2=C_1$, we have that $C_1$ is a \textit{dual-containing} code. This yields the following result.
\begin{corollary}\cite{Aly-Klappenecker-Sarvepalli2007,Ketkar-Klappenecker-Kumar-Sarvepalli2006}\label{quantumcodefromdualcontaining}
Let $C$ be a classical $[n,k,d]$ code over $\mathbb{F}_q$. If $C$ is dual-containing, then there exists a quantum stabilizer code $Q$ with parameters $[[n, 2k-n, \geq d]]_q$ that is pure to $d$. Moreover, if the minimum distance of $C^\perp$ exceeds $d$, then the quantum code $Q$ has minimum distance $d$.
\end{corollary}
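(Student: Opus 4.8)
The plan is to derive Corollary~\ref{quantumcodefromdualcontaining} as the special case $C_1 = C_2 = C$ of the CSS construction stated in Theorem~\ref{css-construction}. First I would observe that if $C$ is an $[n,k,d]_q$ code which is dual-containing, that is $C^{\perp} \subseteq C$, then the hypothesis $C_2^{\perp} \subseteq C_1$ of Theorem~\ref{css-construction} is satisfied with $C_1 = C_2 = C$, since it reads $C^{\perp} \subseteq C$. Applying the theorem directly then produces a stabilizer code with parameters $[[n, k_1 + k_2 - n, d']]_q = [[n, 2k - n, d']]_q$, which is pure to $\min\{d_1,d_2\} = \min\{d,d\} = d$.

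Next I would identify the minimum distance. Theorem~\ref{css-construction} gives $d' = \min\{\operatorname{wt}(\mathbf{c}) \mid \mathbf{c} \in (C_1 \setminus C_2^{\perp}) \cup (C_2 \setminus C_1^{\perp})\}$, which in our case is $d' = \min\{\operatorname{wt}(\mathbf{c}) \mid \mathbf{c} \in C \setminus C^{\perp}\}$. The key point is that $C \setminus C^{\perp} \subseteq C \setminus \{\mathbf{0}\}$, so every nonzero codeword of $C$ that lies outside $C^{\perp}$ has weight at least $d(C) = d$; hence $d' \geq d$. This yields the parameters $[[n, 2k-n, \geq d]]_q$, pure to $d$, as claimed. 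For the ``moreover'' statement, I would argue that if $d(C^{\perp}) > d$, then every nonzero codeword of $C^{\perp}$ has weight strictly greater than $d$; since $C$ contains a codeword of weight exactly $d$ (namely a minimum-weight codeword), such a codeword cannot lie in $C^{\perp}$ and therefore belongs to $C \setminus C^{\perp}$. This shows the minimum in the definition of $d'$ is attained at value $d$, so $d' = d$ exactly.

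I do not anticipate a genuine obstacle here, as the result is essentially a specialization plus a short weight-comparison argument; the only point requiring a little care is making explicit that the set $C \setminus C^{\perp}$ is nonempty (which holds because $C^{\perp} \subsetneq C$ when $2k > n$, equivalently whenever the resulting quantum code has positive dimension) and that it contains a minimum-weight codeword of $C$ under the extra hypothesis on $d(C^{\perp})$. If one wishes to handle the degenerate case $2k - n \le 0$ one simply notes the statement is vacuous or trivial there. I would also cite \cite{Aly-Klappenecker-Sarvepalli2007, Ketkar-Klappenecker-Kumar-Sarvepalli2006} as the original sources, consistent with the attribution already given in the statement.
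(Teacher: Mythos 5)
Your proposal is correct and follows exactly the route the paper intends: the paper introduces this corollary with the words ``In the special case when $C_2=C_1$'' immediately after Theorem~\ref{css-construction}, so the derivation is precisely the specialization $C_1=C_2=C$ together with the observation that $C\setminus C^{\perp}$ consists of nonzero codewords of $C$ (giving $d'\geq d$) and contains a minimum-weight codeword when $d(C^{\perp})>d$ (giving $d'=d$). Your handling of the nonemptiness of $C\setminus C^{\perp}$ and of the ``moreover'' clause is exactly the short weight-comparison argument one expects from the cited sources.
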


In recent years, researchers in coding theory have worked diligently to construct good (e.g., quantum MDS) quantum codes from classical Euclidean, Hermitian, or symplectic self-orthogonal, dual-containing, or self-dual codes \cite{Abdukhalikov-Bag-Panario2023, Ball21, Cao24, DinhLeNguyenTansuchat21,  FangFu19, GalindoHernando, galindo, GaoSunYanZhao25, JinXing13, LiuLiu21, ShiYueChang18, TianZhu19, WanZhengZhu24}. Moreover, the following results (which will be used in the examples in Section \ref{QECC_constructions}) present new quantum code constructions utilizing the lengthening, shortening, and dimension reduction of a given stabilizer quantum code.
\begin{lemma}\cite[Lemmas 69, 70, 71, and Corollary 73]{Ketkar-Klappenecker-Kumar-Sarvepalli2006}\label{qeecc1}
\begin{enumerate}
\item If an $[[n,k,d]]_q$ stabilizer code exists for $k> 0$, then there exists an impure $[[n+1, k,d]]_q$ stabilizer code.  
\item If a pure $[[n,k,d]]_q$ stabilizer code exists with $n\geq 2$ and $d\geq 2$, then there exists a $[[n-1, k+1, d-1]]_q$ pure stabilizer code. 
\item If a (pure) $[[n,k,d]]_q$ stabilizer code exists, with
$k\geq 2$ $(k\geq 1)$, then there exists an $[[n, k-1, d^*]]_q$ stabilizer code (pure to $d$) such that $d^* \geq d$.
\item Suppose that an $[[n,q^k ,d]]_q$ and an $[[n', q^{k'},d']]_q$ stabilizer code exist. Then there exists an \linebreak $[[n+n' , q^{k + k'}, \min\{d, d'\}]]_q$ stabilizer code.
\end{enumerate}
\end{lemma}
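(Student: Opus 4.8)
The plan is to pass to the symplectic (additive-code) model of stabilizer codes, in which all four parts become short manipulations with self-orthogonal additive codes and their symplectic duals, exactly as in~\cite{Ketkar-Klappenecker-Kumar-Sarvepalli2006}. Recall that an $[[n,k,d]]_q$ stabilizer code with $K=q^k$ is equivalent to an additive code $C\subseteq\F_q^{2n}$ that is self-orthogonal with respect to the symplectic form $\langle\cdot,\cdot\rangle_s$ and has $|C|=q^{n-k}$; grouping the $2n$ coordinates into $n$ pairs (one per qudit), the symplectic weight $\mathrm{swt}(v)$ counts the pairs of $v$ that are not both zero, and for $k>0$ the distance is the least symplectic weight attained on $C^{\perp_s}\setminus C$, where $C^{\perp_s}$ is the symplectic dual. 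The code is \emph{pure to} $d$ exactly when every nonzero vector of $C$ has symplectic weight $\ge d$, in which case the least symplectic weight on $C^{\perp_s}\setminus\{0\}$ is also $d$. Two elementary facts will be used: deleting one qudit lowers $\mathrm{swt}$ by at most one, and shortening a code at a coordinate (here, a qudit) is dual to puncturing the dual code at that coordinate.

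For part (1), pad every vector of $C$ with a zero pair on a new $(n+1)$-st qudit and adjoin the vector $z$ that is zero except for a unit $Z$-component on qudit $n+1$; let $\widetilde C$ be the span. Since $z$ is symplectically orthogonal to all padded vectors and to itself, $\widetilde C$ is self-orthogonal of size $q\cdot q^{n-k}=q^{(n+1)-k}$, so it has dimension $k$. One checks that $\widetilde C^{\perp_s}$ consists of the vectors of $C^{\perp_s}$ with arbitrary $Z$-component and zero $X$-component on the new qudit, that such a vector lies outside $\widetilde C$ exactly when its first-$n$-qudit part $w$ satisfies $w\in C^{\perp_s}\setminus C$, and that then its symplectic weight is $\ge\mathrm{swt}(w)\ge d$ with equality attained; hence the distance is exactly $d$. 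Finally $z\in\widetilde C$ has symplectic weight $1$, so (for $d\ge2$) the code is impure.

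For part (3), since $k\ge1$ there is $v\in C^{\perp_s}\setminus C$; set $C'=C+\F_q v$. This is still self-orthogonal (because $v\perp_s v$ and $v\perp_s C$) and has size $q^{n-k+1}$, hence dimension $k-1$, which is positive when $k\ge2$. From $C\subseteq C'\subseteq C^{\perp_s}$ we get $(C')^{\perp_s}\setminus C'\subseteq C^{\perp_s}\setminus C$, so $d^*$, the distance of the new code, is $\ge d$; and if the original code is pure then every nonzero vector of $C'\subseteq C^{\perp_s}$ has weight $\ge d$, so the new code is pure to $d$ (and the construction is valid already for $k\ge1$). For part (4), let $C_1\subseteq\F_q^{2n}$ and $C_2\subseteq\F_q^{2n'}$ be the additive codes of the two given codes and take the external direct sum $\widehat C=C_1\oplus C_2$ on all $n+n'$ qudits; block-diagonality of the symplectic form makes $\widehat C$ self-orthogonal of size $q^{(n+n')-(k+k')}$, its dual is $C_1^{\perp_s}\oplus C_2^{\perp_s}$, and a vector of the dual not lying in $\widehat C$ fails to lie in $C_1$ or in $C_2$ and therefore has symplectic weight $\ge\min\{d,d'\}$; a minimum-weight logical of one block paired with $0$ in the other attains $\min\{d,d'\}$, so the distance equals $\min\{d,d'\}$.

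The delicate point is part (2). Assume the code is pure with $n\ge2$ and $d\ge2$, fix a qudit (call it $n$) in the support of some $w_0\in C^{\perp_s}\setminus C$ of weight $d$, and shorten there: put $C_0=\{v\in C:\ v\text{ is zero on qudit }n\}$ and $\widetilde C=\rho_n(C_0)$, where $\rho_n$ deletes qudit $n$. The key claim is that $P_n(C)=\F_q^2$, where $P_n$ projects onto qudit $n$'s pair of coordinates: the symplectic dual of $P_n(C)$ inside $\F_q^2$ is precisely the set of qudit-$n$ components of the vectors of $C^{\perp_s}$ supported only on qudit $n$, and for a pure code with $d\ge2$ any such nonzero vector would have symplectic weight $1<d$, so that set is $\{0\}$ and $P_n(C)=\{0\}^{\perp_s}=\F_q^2$. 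Hence $|C_0|=|C|/q^2=q^{n-k-2}$, $\rho_n$ is injective on $C_0$, and $\widetilde C$ is a self-orthogonal additive code in $\F_q^{2(n-1)}$ of dimension $(n-1)-(n-k-2)=k+1$. By shortening--puncturing duality $\widetilde C^{\perp_s}=\rho_n(C^{\perp_s})$; if $w\in C^{\perp_s}$ and $\rho_n(w)\notin\widetilde C$ then $w\ne0$, so $\mathrm{swt}(\rho_n(w))\ge\mathrm{swt}(w)-1\ge d-1$, while $\rho_n(w_0)$ has weight exactly $d-1$ and, by the same purity argument ruling out a weight-one dual vector on qudit $n$, does not lie in $\widetilde C$; thus the distance is exactly $d-1$. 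Purity of the new code is immediate from $C_0\setminus\{0\}\subseteq C\setminus\{0\}$, on which $\rho_n$ preserves weight, so $\mathrm{swt}$ on $\widetilde C\setminus\{0\}$ is $\ge d>d-1$. I expect the ``purity forces $P_n(C)=\F_q^2$'' step to be the main obstacle: it is exactly where the hypotheses $d\ge2$ and purity enter, and without it the dimension count in part (2) fails.
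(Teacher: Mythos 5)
This lemma is imported verbatim from the cited reference (Lemmas 69--71 and Corollary 73 of Ketkar--Klappenecker--Kumar--Sarvepalli), so the paper gives no proof of its own; your symplectic/additive-code arguments are correct and reconstruct essentially the same constructions used there (padding with a $Z$-generator, enlarging the stabilizer by an isotropic dual vector, the direct sum, and shortening at a qudit in the support of a minimum-weight dual vector, with purity and $d\geq 2$ forcing $P_n(C)=\F_q^2$). The only point worth noting is that in part (3) the step $|C+\F_q v|=q\,|C|$ uses $\F_q$-linearity of $C$; for merely additive stabilizers one instead picks an isotropic subgroup of order $q$ in $C^{\perp_s}/C$, as in the original source.
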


\section{A Construction of ESO/EDC/ESD QC Codes From Their Constituent Codes }\label{mainsection}

In this section, we discuss the main results of this paper where we propose a recursive construction of infinite families $\{\C_{u}\}_{u\geq 1}$ of good QC codes that are self-orthogonal (or dual-containing or self-dual) with respect to the Euclidean inner product. 
Such constructions of QC codes will be utilized to show the existence of QECCs in Section~\ref{QECC_constructions}.

\subsection{Recursive Construction of Infinite Families of QC Codes}

To provide a careful description of our construction, we first describe a general case of the construction and then specify certain conditions on the constituent codes to explicitly describe the code dimension and lower bound for its minimum distance. Let $x^m-1$ factor into $ 2t +s$ irreducible factors as in \eqref{decomposition_of_R^ell}, where $f_{s}=x-1$.
Then,
\begin{equation} \label{eq:new_r_decomp}
R^{\ell}_{q,m} \cong \bigoplus_{j=1}^t\left(\frac{\F_{q} [x]}{\langle g_j \rangle} \oplus \frac{\F_{q} [x]}{\langle g_j^* \rangle} \right)^{\ell} \oplus \left( \bigoplus_{i=1}^{s-1} \left(\frac{\F_{q} [x]}{\langle f_i \rangle} \right)^{\ell}  \right) \oplus \left(\frac{\F_{q}[x]}{\langle x-1 \rangle}\right)^{\ell}.
\end{equation}
%
%
Let $\mathcal{C}_1 \subseteq \mathbb{F}_{q}^{m\ell}$ be an $\ell$-QC code whose first $2t$ constituent codes are direct sums of arbitrary codes and their duals, i.e.,
\begin{equation*}\label{first_2t_level1}
\bigoplus_{j=1}^{t} \left(C'_{1,j}\oplus {C'}_{1,j}^{\perp_E}\right) \subset \bigoplus_{j=1}^{t}\left(\frac{\F_q [x]}{\langle g_j \rangle} \oplus \frac{\F_q [x]}{\langle g_j^* \rangle} \right)^\ell
\end{equation*}
and whose last $s$ constituent codes are 
\begin{equation*}\label{last_s_level1}
\left(\bigoplus_{i=1}^{s-1} C_{1,i}\right) \oplus C_{1,s} \subset \left(\bigoplus_{i=1}^{s-1}\left(\frac{\F_{q} [x]}{\langle f_i \rangle}\right)^\ell\right) \oplus \left(\frac{\F_{q} [x]}{\langle x-1\rangle} \right)^\ell
\end{equation*}
where the codes $C_{1,i}$ are HSO codes for $1 \leq i \leq s-1$ and $C_{1,s}$ is ESO (see Remark~\ref{euclidean_hermitian_remark}). Thus,
\begin{equation} \label{C1_decomp}
\mathcal{C}_1 \cong  \Bigg(\bigoplus_{j=1}^{t} \left(C'_{1,j}\oplus {C'}_{1,j}^{\perp_E}\right)\Bigg) \oplus \left(\bigoplus_{i=1}^{s-1} C_{1,i}\right) \oplus C_{1,s}.
\end{equation}
As $C_{1,s} \subset \F_{q}^{\ell}$, we can assume without much loss of generality that the it is the constituent code of $\C_{1}$ with the smallest minimum distance. Moreover, as the $C_{1,i}$ are HSO for each $1 \leq i \leq s-1$ and $C_{1,s}$ is ESO, Theorem~\ref{lingsole} tells us that $\mathcal{C}_{1}$ is an ESO $\ell$-QC code of length $m\ell$. Now using $\mathcal{C}_{1}$, we construct another QC code $\mathcal{C}_2$ as follows: Let
\begin{equation*}
R^{m\ell}_{q,m} \cong \bigoplus_{j=1}^t\left(\frac{\F_{q} [x]}{\langle g_j \rangle} \oplus \frac{\F_{q} [x]}{\langle g_j^* \rangle} \right)^{m\ell} \oplus \left( \bigoplus_{i=1}^{s-1} \left(\frac{\F_{q} [x]}{\langle f_i \rangle} \right)^{m\ell}  \right) \oplus \left(\frac{\F_{q}[x]}{\langle x-1 \rangle}\right)^{m\ell}.
\end{equation*}
Let $\mathcal{C}_2 \subseteq  \mathbb{F}_{q}^{m^2\ell}$ be the $m\ell$-QC code whose first $2t$ constituent codes are the following direct sums of arbitrary codes and their duals,
\begin{equation*}\label{first_t_level2}
\bigoplus_{j=1}^{t} \left(C'_{2,j}\oplus {C'}_{2,j}^{\perp_E}\right) \subset \bigoplus_{j=1}^{t}\left(\frac{\F_q [x]}{\langle g_j \rangle} \oplus \frac{\F_q [x]}{\langle g_j^* \rangle} \right)^{m\ell},
\end{equation*}
and whose last $s$ constituent codes are 
\begin{equation*}\label{last_s_level2}
\left(\bigoplus_{i=1}^{s-1} C_{2,i}\right) \oplus \mathcal{C}_1 \subset \left(\bigoplus_{i=1}^{s-1}\left(\frac{\F_{q} [x]}{\langle f_i \rangle}\right)^{m\ell}\right) \oplus \left(\frac{\F_{q} [x]}{\langle x-1\rangle} \right)^{m\ell}
\end{equation*}
where the $C_{2,i}$ are HSO codes for $1 \leq i \leq s-1$ and $\mathcal{C}_1 \subset \left( \frac{\mathbb{F}_{q}[x]}{\langle x-1 \rangle} \right)^{m\ell} \cong \mathbb{F}_{q}^{m\ell}$ is an ESO constituent code of $\mathcal{C}_2$. 
Thus,
%
\begin{equation*}
\mathcal{C}_{2}\cong \Bigg( \bigoplus_{j=1}^{t} \left(C'_{2,j}\oplus {C'}_{2,j}^{\perp_E}\right) \Bigg)\oplus \left( \bigoplus_{i=1}^{s-1} \; C_{2,i} \right) \oplus \mathcal{C}_1
\end{equation*}
Again by Theorem~\ref{lingsole}, $\mathcal{C}_{2}$ is an ESO $m\ell$-QC code of length $m^2\ell$. To extend this recursive construction, we have that any $u\geq 2$, 
\begin{equation*}
R^{m^{u-1}\ell}_{q,m} \cong \bigoplus_{j=1}^t\left(\frac{\F_{q} [x]}{\langle g_j \rangle} \oplus \frac{\F_{q} [x]}{\langle g_j^* \rangle} \right)^{m^{u-1}\ell} \oplus \left( \bigoplus_{i=1}^{s-1} \left(\frac{\F_{q} [x]}{\langle f_i \rangle} \right)^{m^{u-1}\ell}  \right) \oplus \left(\frac{\F_{q}[x]}{\langle x-1 \rangle}\right)^{m^{u-1}\ell}.
\end{equation*}
and define an $m^{u-1}\ell$-QC code $\mathcal{C}_{u}$ in $\mathbb{F}_{q}^{m^u \ell}$ 
whose first $2t$ arbitrary constituent codes are 
\begin{equation}\label{first_2t_general_level1}
\bigoplus_{j=1}^{t} \left(C'_{u,j}\oplus {C'}_{u,j}^{\perp_E}\right) \subset \bigoplus_{j=1}^{t}\left(\frac{\F_q [x]}{\langle g_j\rangle} \oplus \frac{\F_q [x]}{\langle g_j^*\rangle} \right)^{m^{u-1}\ell}
\end{equation}
and whose last $s$ constituent codes are 
\begin{equation}\label{last_s_general_level1}
\left( \bigoplus_{i=1}^{s-1} \;C_{u,i} \right) \oplus \mathcal{C}_{u-1} \subset \left( \bigoplus_{i=1}^{s-1} \left(\frac{\F_{q} [x]}{\langle f_i\rangle} \right)^{m^{u-1}\ell}  \right) \oplus \left(\frac{\F_{q}[x]}{\langle x-1 \rangle}\right)^{m^{u-1}\ell},
\end{equation}
where the $C_{u,i}$ are HSO codes for all $1 \leq i \leq s-1$ and $\mathcal{C}_{u-1} \subset \left( \frac{\mathbb{F}_{q}[x]}{\langle x-1 \rangle} \right)^{m^{u-1}\ell} \cong \mathbb{F}_{q}^{m^{u-1}\ell}$ is an ESO constituent code of $\mathcal{C}_u$. So by Theorem~\ref{lingsole}, 
\begin{equation}\label{decomposition_C_u}
\mathcal{C}_u \cong\Bigg( \bigoplus_{j=1}^t\left( C_{u,j}'\oplus {C'}_{u,j}^{\perp_E} \right) \Bigg) \oplus \left(\bigoplus_{i=1}^{s-1} C_{u,i} \right) \oplus \mathcal{C}_{u-1},
\end{equation}
is an ESO $m^{u-1}\ell$-QC code of length $m^u\ell$ for all $u \geq 2$.

\begin{remark}
    Since the dual of an ESO code is EDC, our codes $\C_u ^{\perp}$ will be an EDC $m^{u-1}\ell$-QC code of length $m^u\ell$ for all $u\geq 2$, in which case Proposition~\ref{lingsole-decomposition2} is utilized.
\end{remark}

\subsection{The Dimension of the Family of Codes $\{\C_u\}_{u \geq 1}$}\label{section:dimension}

In the next lemma, we determine a formula for the dimension of each $\mathcal{C}_u$ in our recursive construction which depend exclusively on the dimensions of constituent codes of $\C_1$. To compute the dimension, we restrict to the special case in which the HSO constituent codes of $\mathcal{C}_u$ are obtained by taking $m^{u-1}$-fold copies of the HSO constituent codes of $\mathcal{C}_1$; that is, $C_{u,i} = m^{u-1}C_{1,i}$ for all $1 \le i \le s-1$. This choice allows us to systematically control how the parameters of $\mathcal{C}_u$ evolve with $u$, since taking copies preserves the dimension while scaling the length and minimum distance in a predictable way; namely, since each $C_{1,i}$ is HSO, it follows from a generalization of Part~(i) of Theorem~\ref{constbasedonprevioussocode} that each $C_{u,i}$ is also HSO. Moreover, if $C_{1,i}$ has parameters $[\ell, k_{1,i}, d_{1,i}]$, then $C_{u,i}$ has parameters $[m^{u-1}\ell,\, k_{1,i},\, m^{u-1}d_{1,i}]$ for all $1 \le i \le s-1$. While this assumption restricts us to a special case, it provides a tractable setting for analysis, and extensions to more general constructions will be explored in future work.
\begin{lemma}\label{dimension_lemma_Cu1}
Let $R_{q,m}^{\ell}$ be as in \eqref{eq:new_r_decomp}. Then for $u\geq 1$, there exists an $m^u \ell$-length QC code $\C_u$ with constituent codes~\eqref{first_2t_general_level1} and \eqref{last_s_general_level1}, whose dimension is
\begin{equation*}
\dim(\mathcal{C}_u) = \ell\left(\frac{m^{u} -1}{m-1}\right)\left(\sum_{j=1} ^t \deg(g_j)\right) + u\left(\sum_{i=1} ^{s-1} \deg(f_i) k_{1,i}\right)+k_{1,s}.
\end{equation*}
where $k_{1,i}=\dim(C_{1,i})$, for all $1 \leq i \leq s$.
\end{lemma}
\begin{proof} 
Let $\C_1$ be the QC code with decomposition on constituent codes as
 
\begin{equation*}\label{C1}
\mathcal{C}_1 \cong \left(\bigoplus_{j=1}^{t} \left({C_{1,j}^{'}} \oplus {C_{1,j} ^{'}}^{\perp_E}\right) \right) \oplus \left(\bigoplus_{i=1}^s C_{1,i} \right).
\end{equation*}
Using Lemma~\ref{dimensionlemma} and the facts that $\deg(g_{j})=\deg(g_j^*)$ and $\dim({C'}_{1,j}^{\perp_E})=\ell-k'_{1,j}$ for all $1 \leq j \leq t$, we have that
\begin{eqnarray*}\label{dimension_C_1}
\dim(\C_1)&=& \left(\sum_{j=1}^t \deg(g_j)k_{1,j} ' + \deg(g_j ^*)  (\ell - k_{1,j} ')\right) + \left( \sum_{i=1}^{s} \deg(f_i)k_{1,i}\right) \nonumber \\
&=&\ell\sum_{j=1}^t \deg(g_j) + \left( \sum_{i=1}^{s-1} \deg(f_i)k_{1,i}\right)+k_{1,s}.
\end{eqnarray*}
For any $u\geq 2$, let $\C_u$ be the QC code defined in \eqref{decomposition_C_u}
where the $C_{u,i}$ are $m^{u-1}$-copies of $C_{1,i}$, namely, $C_{u,i} = m^{u-1}C_{u,i}$, for all $1 \leq i \leq s$. Given its recursive argument, $\dim(\mathcal{C}_u)$ is computed via induction. The case when $u=1$ is done. Now assume that the argument is valid for any integer $u$, i.e., there exists a QC code $\mathcal{C}_u \subset \F_{q} ^{m^u \ell}$ with the following dimension
\begin{equation}\label{dimension_C_u}
\dim(\mathcal{C}_u) = \ell\left(\frac{m^u-1}{m-1} \right)\left(\sum_{j=1}^{t}\deg(g_j)\right)+u \left(\sum_{i=1}^{s-1} \deg(f_{i}) k_{1,i}\right) + k_{1,s} 
\end{equation}
where $k_{1,i}= \dim(C_{1,i})=\dim(m^{u-1}C_{1,i})=\dim(C_{u,i})$ for all $1 \leq i \leq s-1$. Now let $\C_{u+1} \subset \F_{q} ^{m^{u+1}\ell}$ be the QC code with the following $m^{u} \ell$-length constituent codes
\begin{equation*}\label{decompositionproof}
 \C_{u+1} \cong  \left(\bigoplus_{j=1}^{t} \left( C_{u+1, j} ' \oplus C_{u+1,j} ^{' \perp_E}\right) \right) \oplus \left(\bigoplus_{i=1}^{s-1} {C_{u+1,i}}\right) \oplus \C_{u}
\end{equation*}
where each $C_{u+1,i}=m^{u} C_{1,i}$ are also HSO for all $1 \leq i \leq s-1$ (see Theorem~\ref{constbasedonprevioussocode}). Using~\eqref{dimension_C_u}, we have
%
%
\begin{align*}
\dim(\mathcal{C}_{u+1}) &= \sum_{j=1}^t k'_{u+1,j}\deg(g_j)+\sum_{j=1}^t \deg(g_{j}^*){k'}_{u+1,j}^{\perp_E}+\left(\sum_{i=1}^{s-1}\deg(f_i) k_{1,i}\right)+\dim(\mathcal{C}_{u}) \\
& = \sum_{j=1}^t \left[ \deg(g_{j}) k'_{u+1,j} + \deg(g_{j})\left(m^u\ell-k'_{u+1,j}\right) \right] + \left(\sum_{i=1}^{s-1}\deg(f_i) k_{1,i}\right)+\dim(\mathcal{C}_{u})\\
& = m^u\ell \sum_{j=1}^t\deg(g_j )+\left(\sum_{i=1}^{s-1}\deg(f_i )k_{1,i}\right)+\ell\left(\frac{m^u-1}{m-1} \right)\left(\sum_{j=1}^{t}\deg(g_j )\right)+u\left(\sum_{i=1}^{s-1} \deg(f_{i} ) k_{1,i}\right) + k_{1,s}\\
&=\ell\left(\frac{m^{u+1} -1}{m-1}\right)\left(\sum_{j=1} ^t \deg(g_j)\right) + (u+1)\left(\sum_{i=1} ^{s-1}\deg(f_i) k_{1,i} \right)+k_{1,s}
\end{align*}
as desired. \qed
\end{proof}

\subsection{A Lower Bound for the Minimum Distance of the Family of Codes $\{\C_u\}_{u \geq 1}$}

In this section, we will use \textit{Jensen's bound} to find a lower bound for the minimum distance of the QC code $\C_u \subset \F_q ^{m^u \ell} $ for each $u \geq 1$. 
\color{black}
In order to use Theorem \ref{jensenbound} in conjunction with our recursive construction of $\C_u$ outlined in \eqref{first_2t_general_level1} and \eqref{last_s_general_level1}, the constituent codes of each $\C_u$ must satisfy the descending chain condition in~\eqref{chaincond}. We first decompose $\C_u$ for $u\geq 2$ as
\[ \C_u \cong \bigoplus_{\lambda=1}^{2t+s} C_{u,\lambda}\]
where the $C_{u,\lambda}$ are the nontrivial constituent codes of $\C_u$ whose minimum distances satisfy~\eqref{chaincond}, noting that $\C_{u-1} = C_{u,\lambda}$ for some $\lambda$. In general, we do not have enough information to conclude how $d(\C_{u-1})$ compares with the minimum distances of the other constituent codes for $\C_u$ and do not know where it appears in the descending chain. Thus, we require the additional condition that the constituent code with the smallest minimum distance in $\C_u$ grows by a factor of $m$ from the smallest constituent code in $\C_{u-1}$, i.e.,
\begin{equation}\label{constituent_code_condition2}
d(\C_{u,2t+s})\ge md(\C_{u-1,2t+s}).
\end{equation}
This is not a major assumption as $\C_u \subset \F_q ^{m^u \ell}$ and $\C_{u-1} \subset \F_q ^{m^{u-1} \ell}$, so its minimum distance can be a factor of $m$ larger than $\C_{u-1}$. Repeated application of~\eqref{constituent_code_condition2} yields
\begin{equation}\label{constituent_code_condition3}
d(\C_{u,2t+s})\ge m^{u-1}d_{1,s}
\end{equation}
where $d_{1,s}=d(C_{1,s})$ in~\eqref{C1_decomp}.
We emphasize that having such a condition will be vital in proving a lower bound for the minimum distances of $\C_u$ for each $u$ in the following lemma. This condition can be satisfied, for instance, by setting $C_{u,\lambda}=m C_{u-1,\lambda}$ for each $u\geq 2$ and $1 \leq \lambda \leq 2t+s$. However, unlike in Lemma~\ref{dimension_lemma_Cu1}, taking $m$ copies of constituent codes is not required to obtain our result.

\begin{lemma}\label{minimumdistance_lemma_Cu1}
Let $R_{q,m}^{\ell}$ be as in \eqref{eq:new_r_decomp}.  Then for $u\geq 1$, there exists an $m^u \ell$-length QC code $\C_u$ with constituent codes~\eqref{first_2t_general_level1} and \eqref{last_s_general_level1}, with minimum distance $d(\mathcal{C}_u) \geq m^{u-1}d_{1,s}$, where $d_{1,s}=d(C_{1,s})$.
\end{lemma}

\begin{proof} The case for $\C_1$ is trivial as we are assuming that $C_{1,s}$ has the smallest minimum distance in the representation of $\C_1$ in~\eqref{C1_decomp}. For each $u\geq2$, we will assume that the constituent codes in the decomposition of $\C_u$ are ordered according to the descending chain condition~\eqref{chaincond}. The proof will require that we traverse our table horizontally by way of the descending chain condition to the last constituent then up the table vertically to use the inequality \eqref{constituent_code_condition3}. For each $u\geq 2$, using the descending chain condition~\eqref{chaincond} and inequality \eqref{constituent_code_condition3}, we have the following:
\begin{align*}
N_{2t+s} (\C_u) &= d(C_{u,2t+s})d(D_{2t+s}) \ge m d(C_{u-1,2t+s}) \ge \ldots \ge m^{u-1} d(C_{1,s}) = m^{u-1} d_{1,s},\\
N_{2t+s-1,2t+s} (\C_u) &= d(C_{u,2t+s-1})d(D_{2t+s-1,2t+s})\geq d(C_{u,2t+s})\geq m^{u-1} d(C_{1,s}) = m^{u-1} d_{1,s},\\
& \vdots \\
N_{1,2,...,2t+s}(\C_u) &= d(C_{u,1})d(D_{1,2,...,2t+s}) \ge d(C_ {u,2t+s}) \geq m^{u-1} d(C_{1,s}) = m^{u-1} d_{1,s}.
\end{align*}
Thus, $d(\C_u) \geq d_{\mathrm{J}}(\C_u) = \min\{N_{2t+s}(\C_u), N_{2t+s-1,2t+s}(\C_u),\ldots, N_{1,2,\ldots,2t+s}(\C_u)\} \geq m^{u-1}d_{1,s}$ as desired. \qed
\end{proof}

\begin{example} For sake of clarity, consider the following table of the decompositions of $\C_u$ for $2 \leq u \leq 4$, where the decomposition is found in~\eqref{decomposition_C_u} and the constituent codes satisfy~\eqref{constituent_code_condition2}.
\begin{center}
\begin{tabular}{|c|c|}
\hline 
$u$ & Decomposition \\
\hline 
$2$ & $\mathcal{C}_2 \cong C_{2,1} \oplus C_{2,2} \oplus \cdots \oplus  C_{2,2t+s}$    \\
\hline
$3$ & $\mathcal{C}_3 \cong C_{3,1} \oplus C_{3,2} \oplus \cdots \oplus  C_{3,2t+s}$   \\
\hline
$4$ & $\mathcal{C}_4 \cong C_{4,1} \oplus C_{4,2} \oplus \cdots \oplus  C_{4,2t+s}$\\
\hline 
\end{tabular}
\captionof{table}{Decompositions of $C_2, \C_3, \C_4$ to obtain lower bound for $d(\C_4)$}
\end{center}
Traversing horizontally through the row, we have that for $I = \{r,r+1,\ldots,2t+s\} \subseteq \{1,2,\ldots, 2t+s\}$, $N_I(\C_4) \ge d(C_{4,r}) \ge d(C_{4,2t+s})$ by the descending chain condition, and then by equation ~\eqref{constituent_code_condition2}, traversing vertically up the table along the last constituent codes yields $N_I(\C_4) \ge m^3 d(C_{1,s}) = m^3 d_{1,s}$ for all $I$. Hence, $d(\C_4) \geq m^3d_{1,s}.$
\end{example}

\color{black}
The following theorem provides a summary of our construction of a family of ESO QC codes in $\mathbb{F}_{q}^{m^{u}\ell}$.
\begin{theorem}\label{maintheorem1}
 Let $R_{q,m}^{\ell}$ be as in \eqref{eq:new_r_decomp}. Let $\mathcal{C}_{u}$ be the $m^u\ell$-length QC code whose constituent codes are in \eqref{first_2t_general_level1} and \eqref{last_s_general_level1}, where $C_{u,i}=m^{u-1}C_{1,i}$, for all $1 \leq i \leq s-1$, and that their minimum distances satisfy~\eqref{chaincond}. Then $\{\mathcal{C}_u\}_{u\geq 1}$ is an infinite family of ESO $q$-ary QC codes with parameters 
\begin{equation}
\left[m^u \ell ,\ell\left(\frac{m^{u} -1}{m-1}\right)\left(\sum_{j=1} ^t \deg(g_j)\right) + u\left(\sum_{i=1} ^{s-1} \deg(f_i) k_{1,i}\right)+k_{,1s}, \geq m^{u-1} d_{1,s}\right]_q.
\end{equation}
\end{theorem}

\begin{proof}
The parameters and duality of $\C_u$ comes from the recursive construction, and Lemmas~\ref{dimension_lemma_Cu1} and~\ref{minimumdistance_lemma_Cu1}. \qed
\end{proof}

The next result is a particular case of Theorem~\ref{maintheorem1} which allow us also to construct infinite families of not only ESO (or EDC) QC codes, but also QC codes that are ESD. From now on, we assume $m$ is odd when $q$ is odd (see Remark~\ref{euclidean_hermitian_remark}) so that the unique self-reciprocal irreducible factor of $x^m-1$ is $x-1 \in \F_q [x]$. Hence in these special cases, we no longer need to use copies of the constituent codes of $\C_u$ to obtain the constituent codes of $\C_{u+1}$. Consequently, as we no longer have the dimensional restriction of $\dim(C_{u,i})=\dim(C_{1,i})=k_{1,i}$ for each $1\leq i \leq s-1$, we may allow for the $\C_u$ to ESD. Note that $\ell$ must be even in order for $\C_u$ to be ESD.

\begin{corollary}\label{cormainTh} 
Let $\displaystyle R_{q,m}^{\ell}$ be as in \eqref{eq:new_r_decomp} where the only self-reciprocal factor of $x^m-1$ is $x-1$. Let $\C_1 \subset \F_q ^{m\ell}$ be the ESO (resp. EDC or ESD) $\ell$-QC code whose constituent codes are $C_{1,j} \oplus C_{1,j} ^{\perp_E} \subset \left(\frac{\F_q [x]}{\langle g_j \rangle} \oplus \frac{\F_q [x]}{\langle g_j ^* \rangle}\right)^{\ell}$ for $1 \leq j \leq t$, and an ESO (resp. EDC or ESD)  code $C_{1,t+1} \subset \left(\frac{\F_q [x]}{\langle x-1\rangle }\right)^{\ell}$ with parameters $[\ell, k_{1,t+1}, d_{1,t+1}]_q$. Suppose that $d(C_{1,1})\geq d(C_{1,1}^{\perp_E})\geq \cdots\geq d(C_{1,t})\geq d(C_{1,t}^{\perp_E})\geq d(C_{1,t+1}).$ Then there exists an infinite family $\{\C_u\}_{u\geq 1}$ of ESO (resp. EDC or ESD) $q$-ary QC codes with parameters 
\begin{equation}
\left[m^u \ell, \ell\left(\frac{m^{u} -1}{m-1}\right)\left(\sum_{j=1} ^t \deg(g_j)\right) +k_{1, t+1}, \geq m^{u-1} d_{1,t+1}\right]_q.
\end{equation}
\end{corollary}

\begin{proof} 
This is a special case of Theorem~\ref{maintheorem1} where $s=1$, and the unique self-reciprocal irreducible factor of $x^m -1$ is $x-1$. Hence, 
\begin{equation}
\sum_{j=1} ^{t} \deg(g_j)=\frac{1}2\left[(m-\left(\sum_{i=1}^{s-1}\deg(f_i) \right)-1\right] = \frac{1}{2}(m-1), 
\end{equation}
namely, $\displaystyle{\dim \C_u = \ell\left(\frac{m^{u} -1}{m-1}\right)\left(\sum_{j=1} ^{t} \deg(g_j)\right) +k_{1, t+1} = \frac{\ell}{2}\left(m^{u} -1\right) +k_{1, t+1}}$. 

The duality of all constituent codes may be taken under the Euclidean inner product (see Remark~\ref{euclidean_hermitian_remark} - item (i)) and it is defined exclusively by the $q$-ary constituent code $C_{1,t+1}$. Indeed, if $C_{1,t+1}$ is ESO (resp. EDC or ESD), then $\C_1$ also is. By the recursive construction, for any $u\geq 1$, the duality of $\C_1$ defines the duality of $\C_2$, and so on. This avoids relying on copies of constituent codes, which would otherwise restrict the construction to only ESO (or EDC) codes because of the dimensional limitations of self-dual and dual-containing codes. \qed
\end{proof}

We note that if we want $x^m-1$ to have three distinct irreducible factors in $\F_q[x]$, of which $x-1$ is one, we need that $m$ is prime or the square of a prime. Otherwise, $m$ would have at least two distinct prime factors so that in the factorization
\begin{equation}\label{factorization}
x^m - 1 = \prod_{d\mid m} \Phi_d(x),
\end{equation}

\noindent where $\Phi_d(x)$ denotes the $d$-th cyclotomic polynomial, there would be at least four cyclotomic polynomial factors ($1$, $m$, and the distinct prime factors of $m$). This would result in at least four irreducible factors.

If $m$ is prime,
\begin{equation}
x^m-1 = \Phi_1(x) \Phi_m(x) = (x-1)\Phi_m(x),
\end{equation}

\noindent so we must consider the irreducible factors of $\Phi_m$. The polynomial $\Phi_m$ has $\frac{\phi(m)}{\operatorname{ord}_m(q)} = \frac{m-1}{\operatorname{ord}_m(q)}$ irreducible factors. 
Thus, if $x^m-1$ has three irreducible factors, we need $\operatorname{ord}_m(q) = \frac{m-1}{2}$, i.e. $q^{(m-1)/2} \equiv 1 \Mod{m}$.

If $m$ is the square of a prime, say $p^2$ where $p$ is prime,
\begin{equation}
x^m-1 = \Phi_1(x) \Phi_p(x) \Phi_{p^2}(x).
\end{equation}

\noindent We need that both $\Phi_p$ and $\Phi_{p^2}$ are both irreducible, i.e., $1 = \frac{\phi(p)}{\operatorname{ord}_p(q)} = \frac{p-1}{\operatorname{ord}_p(q)}$ so that $\operatorname{ord}_p(q) = p-1$ and $1 = \frac{\phi(p^2)}{\operatorname{ord}_{p^2}(q)} = \frac{p(p-1)}{\operatorname{ord}_{p^2}(q)}$ so that ${\operatorname{ord}_{p^2}(q)} = p(p-1)$. Tables~\ref{table1_pairs_of_reciprocals_x-1} and~\ref{table2_pairs_of_reciprocals_x-1} provide some examples of $m$ and $q$ pairs where $x^m -1 \in \F_{q} [x]$ decomposes into one pair of reciprocal polynomials and $x-1$. One caveat is that over $\F_{q^2}[x]$, i.e., when $q$ is the square of a prime, $\Phi_p$ is irreducible if and only if $p=2$ by Fermat's little theorem, and so $x^{p^2}-1$ does not have three irreducible factors over $\F_{q^2}$.

\begin{center}
\begin{tabular}{|c|c|}
\hline
$q$ & $1\leq m \leq 100$               \\ \hline
2     & 7, 17, 23, 41, 47, 71, 79, 97  \\ \hline
3     & 11, 23, 37, 47, 59, 71, 83, 97 \\ \hline
5     & 11, 19, 29, 41, 59, 61, 79, 89 \\ \hline
7     & 3, 31, 47, 53, 59, 83          \\ \hline
11    & 7, 53, 79, 83, 97              \\ \hline
\end{tabular}
\captionof{table}{\label{table1_pairs_of_reciprocals_x-1} Some values of $m$ and $q$ for which $x^m -1\in \F_q[x]$ decomposes into three irreducible factors over $\F_q$.}
\bigskip
\begin{tabular}{|c|c|}
\hline
$q^2$ & $1\leq m \leq 100$             \\ \hline
4     & 3,5,7,11,13,19,23,29,37,47,53,59,61,67,71,79,83              \\ \hline
9     & 5,7,11,17,19,23,29,31,43,47,53,59,71,79,83,89\\ \hline
25    & 3,7,11,17,19,23,37,43,47,53,59,73,79,83,97\\ \hline
49    & 3,5,11,13,17,23,31,41,47,59,61,67,71,79,83,89,97             \\ \hline
121   & 3,7,13,17,23,29,31,41,47,59,67,71,73,79,83              \\ \hline
\end{tabular}
\captionof{table}{\label{table2_pairs_of_reciprocals_x-1} Some values of $m$ and $q^2$ for which $x^m -1 \in \F_{q^2}[x]$ decomposes into three irreducible factors over $\F_{q^2}$.}
\end{center}

\begin{example}
Let 
\begin{eqnarray*}
R_{3,11} ^{5}&\cong&\left(\frac{\F_3 [x]}{\langle x^5 + 2x^3 +x^2 +2x+2\rangle}\right)^5\oplus \left(\frac{\F_3 [x]}{\langle x^5 + x^4 +2x^3 +x^2 +2 \rangle }\right)^5 \oplus \left(\frac{\F_3 [x]}{\langle x-1 \rangle }\right)^5 \\
&=&\left(G_1 '\oplus G_1 ''\right)^5 \oplus \left( \F_3\right)^5 ,
\end{eqnarray*}
where $G_1 ' = \F_3 (\alpha)$. In addition, let $C_{1,1} ' = \langle (1,2,1,2,1), (\alpha, \alpha^2, \alpha^3, \alpha^4,\alpha^5 ) \rangle$ be the $[5, 2, 4]_{3^5}$-code, $C_{1,1} '' = C_{1,1} ^{' \perp_E} \subset (G_1 '')^{5}$ be the $[5,3,3]_{3^5}$-code, and $C_{11} = \langle (1,1,1,0,0) , (1,2,0,1,0)\rangle $ be the self-orthogonal $[5,2,3]_3 $-code. Let these codes be are the constituent codes of the QC code $\C_1$. For any $u\geq 2$, let the first two constituent codes of $\C_u$ be multiples of $\C_{1,1}'$ and $C_{1,1} ^{' \perp_E}$ to obtain the $m^u \ell$-length QC codes
\begin{eqnarray}
\C_u \cong 11^{u-1}C_{1,1} ' \oplus 11^{u-1}C_{1,1} ^{' \perp_E} \oplus \C_{u-1}.
\end{eqnarray}
According to Corollary~\ref{cormainTh}, the $\{\mathcal{C}_u\}_{u\geq 1}$ is an infinite family of ESO QC codes with parameters $$\left[5\cdot 11^u , \frac{5\cdot 11^u -1}{2} , \geq 3\cdot 11^{u-1} \right]_3.$$
\end{example}

\section{Satisfying the Square-Root-Like Lower Bound}\label{section:sqrtbound}

Using results from~\cite{lingsole} and developments presented in this paper, we show how the construction of codes in Section~\ref{mainsection}  satisfy the square-root-like lower bound and then present explicit examples. We first state the following helpful lemma.
\begin{lemma}\label{lemmaforinffamsatisfyingsquarerootbound}
If $m \ge 1$ and $u \ge 2$ are positive integers, then $\sqrt{m^u} \le  m^{u-1}.$
\end{lemma}

\begin{proof}
If $u\ge 2$ , then $u \le 2u-2$. For $m\ge 1$, this gives $m^u \le m^{2u-2}$ and so $\sqrt{m^u} \le m^{u-1}$. \qed
\end{proof}

\begin{theorem}
Let $\{\C_u\}_{u\geq 1}$ be the infinite family of ESO $q$-ary QC codes proposed in Theorem~\ref{maintheorem1}. Then
\begin{itemize}
    \item[(i)] If $d(C_{1,s}) \ge c\sqrt{\ell}$ for some constant $c$, then all codes in $\{\C_u\}_{u\geq 2}$ satisfy the square-root-like bound for their minimum distances.
    \item[(ii)] Otherwise, there exist a positive integer $L >1 $ so that the all codes in the subfamily $\{\C_u\}_{u\geq L}$ of $\{\C_u\}_{u\geq 1}$ satisfies the square-root-like bound for their minimum distances.    
\end{itemize}
\end{theorem}
\begin{proof}
\begin{itemize}
    \item[(i)] By Lemma~\ref{lemmaforinffamsatisfyingsquarerootbound}, for each $m^u \ell$-length QC code $\C_u$ with $u\geq 2$, we have 
\begin{equation*}
d(\C_u)\geq m^{u-1}d_{1,s}\geq \sqrt{m^u}d_{1,s}\geq \sqrt{m^u}c\sqrt{\ell}=c\sqrt{m^u \ell},   
\end{equation*}
namely, the minimum distances in $\{\C_u \}_{u\geq 2}$ satisfy the square-root-like bound and the result follows.
\item[(ii)] Notice that 
\[m^{u-1} d_{1s} \geq c\sqrt{m^u \ell} \iff  m^{2u -2} d_{1s} ^2 \geq c^2 m^u \ell \iff m^{u-2} \geq \frac{c^2 \ell}{d_{1s} ^2}.\] 
Hence,
$u-2 \geq \log_{m} \frac{c^2 \ell}{d_{1s}^2}$, and when $u \geq \left\lceil\log_{m} \frac{c^2 \ell}{d_{1s}^2} \right\rceil +2=: L$, the subfamily $\{\C_u\}_{u\geq L}$ of $\{\C_u\}_{u\geq 1}$ satisfies the square-root-like bound.  \qed  
\end{itemize}
\end{proof}

In the special case where 
$\displaystyle R_{q,m}^{\ell}$ in \eqref{eq:new_r_decomp} has only $x-1$ as the self-reciprocal factor of $x^m-1$, we obtain EDC and ESD codes that satisfying the square-root-like bound.

\begin{corollary}\label{inffamilysquareroot}
Let $\{\C_u\}_{u\geq 1}$ be the infinite family of ESO/EDC/ESD $q$-ary QC codes proposed in Corollary~\ref{cormainTh}.
Then
\begin{itemize}
    \item[(i)] If $d(C_{1,t+1}) \ge c\sqrt{\ell}$ for some constant $c$, then all codes in $\{\C_u\}_{u\geq 2}$ satisfy the square-root-like bound for their minimum distances.
    \item[(ii)] Otherwise, there exist a positive integer $L >1 $ so that the all codes in the subfamily $\{\C_u\}_{u\geq L}$ of $\{\C_u\}_{u\geq 1}$ satisfies the square-root-like bound for their minimum distances.
\end{itemize}
\end{corollary}

\begin{remark}
Any $c \le m^{u/2-1}$ will work for square-root-like bound. If we would like a uniform $c$, then $c < \sqrt{m}$ works for all $u\geq 1$. 
\end{remark}

\begin{remark}
    Additionally, is straightforward to see that the lower-bound for the minimum distance in Theorem~\ref{maintheorem1} is asymptotically better than the square-root-like lower bound. Indeed, given that $d_{1s}, m,c$, and $\ell$ are positive integers, we have
\begin{equation*}
\lim_{u \rightarrow +\infty} \frac{m^{u-1}d_{1s}}{c\sqrt{m^u \ell}} =   \lim_{u \rightarrow +\infty} \frac{m^{\frac{u}{2}-1}d_{1s}}{c \sqrt{\ell}} = +\infty > 1.
\end{equation*}

\end{remark}

\begin{example}\label{exinffamilysqroot}
Let 
\begin{eqnarray*}
R_{5,11} ^{6}&\cong&\left(\frac{\F_5 [x]}{\langle x^5 + 2x^4 +4x^3 +x^2 +x+4\rangle}\right)^6 \oplus \left(\frac{\F_5 [x]}{\langle x^5 +4x^4 +4x^3 +x^2 +3x+4 \rangle }\right)^6 \oplus \left(\frac{\F_5 [x]}{\langle x-1 \rangle }\right)^6 \\
&=&\left(G_1 '\oplus G_1 ''\right)^6 \oplus \left( \F_5\right)^6 ,
\end{eqnarray*}
where $G_1 ' = \F_5 (\alpha)$. Let $C_{1,1} '$ be a $[6, 3, 4]_{5^5}$-GRS code and $C_{1,1}^{''} = C_{1,1} ^{' \perp_E}$, then $C_{11} ''$ is also a $[6,3,4]_{5^5}$-GRS code. In addition, let $C_{1,1} = \langle (1,0,0,2,2,4),(0,1,0,2,4,2),(0,0,1,4,2,2)\rangle$ be a self-dual $[6,3,4]_5 $ code. Observe that all constituent codes are MDS. Therefore, according to Corollary~\ref{inffamilysquareroot}, it is possible to get a infinite family $\{\mathcal{C}_u\}_{u\geq 2}$ of ESD QC codes with parameters $\left[4\cdot 11^u , 2\cdot 11^u , \geq 4\cdot 11^{u-1} \right]_5$ for $u\geq 2$ that satisfy the square-root-like bound for their minimum distances. 
\end{example}

\begin{example}\label{example_optimal2}
Let $\alpha$ be a primitive $7$-th root of unity and let 
\begin{eqnarray*}\label{CRT_decomposition_R2^3}
R_{2,7}^8 &\cong& \left(\frac{\mathbb{F}_2 [x]}{\langle x^3 +x +1 \rangle} \right)^8 \oplus\left(\frac{\mathbb{F}_2 [x]}{\langle x^3 +x^2 +1 \rangle}\right)^8\oplus\left(\frac{\mathbb{F}_2 [x]}{\langle x -1 \rangle}\right)^8 \nonumber \\
&\cong & \mathbb{F}_2 ^8(\alpha) \oplus \mathbb{F}_2 ^8 (\alpha^3 ) \oplus \mathbb{F}_2 ^8.
\end{eqnarray*}
Let $C_{1,1}'$ be an $[8,3,6]_8$ MDS code (whose construction can be checked in~\cite{table}), $C_{1,1} ^{''} = C_{1,1}^{'\perp_E}$ be an  $[8,5,4]_8$ (also MDS) code, and $C_{1,1}$ be an EDC $[8,6,2]_2 $-code with basis 
\begin{align*}
 B = \{ &(1,0,0,1,0,0,0,0), (0, 1, 0, 1, 0, 0, 0, 0), (0, 0, 1, 1, 0, 0, 0, 0), (0, 0, 0, 0, 1, 0, 0, 1), (0,0,0,0,0,1,0,1), \\
 & (0, 0, 0, 0, 0, 0, 1, 1)\}.
\end{align*}
%
%
%
Note that $d(D_3)=7$, $d(D_{2,3})=3$, and $d(D_{1,2,3})=1$. So from Lemma~\ref{dimensionlemma} and Theorem~\ref{jensenbound}, the code $\C_1$ is a $[56,30,\ge 6]_2$ $8$-QC code.
By Corollary~\ref{cormainTh}, we obtain a  $\left[ 8\cdot 7^u, 4\cdot 7^u+2, \geq 2\cdot 7^{u-1}\right]_2$ QC code.

When $u\geq 3$, $d \geq \sqrt{8\cdot7^u}$, which shows that the infinite family $\{C_u\}_{u \geq 3}$ satisfies the square-root bound for the minimum distances. Moreover for $c=5$, we have $\left\lceil\log_{m} \frac{c^2 \ell}{d_{1s}^2} \right\rceil +2= 5$. Thus by Corollary~\ref{inffamilysquareroot}, we then obtain an infinite family of QC codes $\{\C_u\}_{u\geq 5}$ whose minimum distances are greater than $c\sqrt{8\cdot 7^u}$. 
\end{example}

We conclude this section by emphasizing that, alongside the seminal constructions of infinite families of self-dual cyclic codes satisfying the square-root-like bound in~\cite{infiniteselfdualfamilies}, our work introduces a novel construction of infinite families of QC codes that also attain this square-root-like lower bound on minimum distance.

\color{black}

\section{Galois Closed QC Codes and Their Duals}\label{Galois-Closed-Section}

Having QC codes whose Hermitian and Euclidean duals coincide is a valuable property, as it allows us to move freely between the two inner products. To achieve this property, we introduce the notion of the Galois closure of a code $C$ using notation from~\cite{Jin2010}: Given a positive integer $m\geq 1$ and $c=(c_1 , c_2 , ..., c_n)\in C \subset \mathbb{F}_q ^n$ (in this case, $C$ is just a subset), we define $c^m = (c_1 ^m, c_2 ^m, ..., c_n ^m)$ and $C^m = \{c^m : c \in C \} \subset \mathbb{F}_q ^n.$ If $C$ is a subspace of $\mathbb{F}_{q}^n$ and $m$ is a power of the characteristic of $\mathbb{F}_{q}$, i.e., $m = q ^{t}$ for some $t \in \mathbb{Z}^+$, then $C^m$ is also a subspace of $\mathbb{F}_{q}^n$.

Let $\F_q \subset \F_{q^r}$ and $C$ be an $\F_{q^r}$-code. We say that $C$ is \textit{Galois closed} if $C = C^q$, where further on such codes details can be found in~\cite{Bierbrauer}.
The Frobenius map $C\mapsto C^q$ respects the quasi-cyclic property of $C$ so that $C^q$ is also quasi-cyclic whenever $C$ is. Furthermore, if $\phi(C) , \phi(C^q) \subset \left(\frac{\F_{q^2} [x]}{\langle x^m -1\rangle }\right)^{\ell}$, then we can look at the constituent codes of $C$ and $C^q$ and assume that the irreducible factors of $x^m -1$ are aligned for both decompositions.

\begin{remark}\label{codeC^q}
%
For any code $C$ over $\mathbb{F}_{q^2}$, we observe that the Hermitian dual $C^{\perp_H}$ is equal to the Euclidean dual $\left(C^q\right){^{\perp_E}}$ of $C^q$. Hence, $C$ is HSO if and only if $C \subset (C^q)^{\perp_E}$, i.e., $C^{q} \subset C^{\perp_{E}}$. Moreover, if $C$ is Galois closed, then $C^{\perp_{H}}=C^{\perp_{E}}$. Thus, $C$ is an HSO (resp. HDC, HSD) code if and only if $C$ is ESO (resp. EDC, ESD).   
\end{remark}

The following two propositions give a classification and construction for the Galois closed codes in $\mathbb{F}_{q^2}^{\ell}$, namely that they are direct sums of one-dimensional subspaces spanned by vectors with entries $0$ or $v_i$ satisfying $v_i^{q-1} = \beta$, for some fixed $\beta\in \mathbb{F}_{q^2}$. This allows for easy construction of Galois closed codes and demonstrates that they are not rare but rather plentiful, making them indeed useful and desirable.

\begin{proposition}
If $C \subset \mathbb{F}_{q^2}^\ell$ is Galois closed and $C=\langle \textbf{v} \rangle$ where $\textbf{v} = (v_1, \ldots, v_\ell)$, then $v_i = 0$ or $v_i^{q-1} = \beta$ for some fixed $\beta\in \mathbb{F}_{q^2}$ for all $1 \leq i \leq s$.  
\end{proposition}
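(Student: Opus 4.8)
The plan is to exploit the defining condition $C = C^q$ for a one–dimensional code $C = \langle \mathbf v\rangle$ over $\mathbb F_{q^2}$, where $\mathbf v = (v_1,\dots,v_\ell)$. Since $C$ has dimension one, $C^q = \langle \mathbf v^q\rangle$ is also one–dimensional, and $C = C^q$ forces $\mathbf v^q = \lambda \mathbf v$ for some nonzero scalar $\lambda \in \mathbb F_{q^2}$. First I would write this coordinatewise: $v_i^q = \lambda v_i$ for every $1 \le i \le \ell$. For any index $i$ with $v_i \neq 0$ this gives $v_i^{q-1} = \lambda$, so all the nonzero coordinates share the common value $\beta := \lambda$; the remaining coordinates are $0$, which is exactly the claimed dichotomy.

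The one genuine point to check is that $\beta$ actually lies in $\mathbb F_{q^2}$ (it does, trivially, since $v_i \in \mathbb F_{q^2}$ and $\mathbb F_{q^2}$ is closed under multiplication and inversion) and, more substantively, that $\mathbf v$ really has a nonzero coordinate so that $\lambda$ is well defined and nonzero — but $\mathbf v \neq \mathbf 0$ since it spans a nonzero code, so some $v_{i_0} \neq 0$, and then $\lambda = v_{i_0}^{q-1} \neq 0$. One can also note for completeness that applying the relation twice yields $v_i^{q^2} = \lambda^{q+1} v_i$, and since $v_i^{q^2} = v_i$ for $v_i \in \mathbb F_{q^2}$, the scalar $\lambda = \beta$ automatically satisfies $\beta^{q+1} = 1$, i.e. $\beta$ is a $(q+1)$-st root of unity in $\mathbb F_{q^2}$ (equivalently, $\beta$ lies in the norm-one subgroup). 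This last observation is not needed for the statement as written but clarifies the structure and will be relevant to the companion construction result mentioned in the surrounding text.

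The main obstacle here is essentially bookkeeping rather than mathematics: one must be careful that "$C = C^q$" is interpreted as equality of the two one–dimensional subspaces (not coordinatewise equality of $\mathbf v$ and $\mathbf v^q$), so that the scalar $\lambda$ is introduced correctly. Once that is done, the argument is a two–line consequence of comparing coordinates. I would therefore present the proof as: (1) reduce to $\mathbf v^q = \lambda\mathbf v$ using one–dimensionality; (2) read off $v_i = 0$ or $v_i^{q-1} = \lambda$ coordinatewise; (3) set $\beta = \lambda$ and observe $\beta \in \mathbb F_{q^2}$, optionally remarking $\beta^{q+1}=1$.
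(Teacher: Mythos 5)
Your proof is correct. The paper states this proposition without providing a proof, so there is no argument to compare against; your reduction is the natural one and it is complete: since the Frobenius $x\mapsto x^q$ is an automorphism of $\mathbb{F}_{q^2}$, the set $C^q=\{(\alpha\mathbf{v})^q\}=\{\alpha^q\mathbf{v}^q\}$ is exactly the one-dimensional subspace $\langle\mathbf{v}^q\rangle$, so $C=C^q$ forces $\mathbf{v}^q=\lambda\mathbf{v}$ with $\lambda\neq 0$, and reading this off coordinatewise gives $v_i=0$ or $v_i^{q-1}=\lambda=:\beta$. Your added observation that $\beta^{q+1}=1$ (so $\beta$ lies in the norm-one subgroup) is also correct and is the right structural remark to connect this proposition to the construction result that follows it in the paper.
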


\begin{proposition}
Let $C \subseteq \mathbb{F}_{q^2}^{\ell}$ be a $k$-dimensional code. Then $C$ is Galois closed if and only if $C = \bigoplus_i^k C_i$ where $C_i = C_i^q$ and $\dim(C_i) = 1$, i.e., $C$ is Galois closed if any only if can be decomposed as Galois closed codes of dimension $1$.
\end{proposition}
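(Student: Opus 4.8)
The plan is to establish the nontrivial implication by showing that every Galois closed code $C\subseteq\F_{q^2}^{\ell}$ admits an $\F_{q^2}$-basis consisting of vectors lying in $\F_q^{\ell}$; the decomposition into one-dimensional Galois closed pieces then drops out at once. The reverse implication is the easy one: the Frobenius map $\mathbf{x}\mapsto\mathbf{x}^q$ is additive, so if $C=\bigoplus_{i=1}^{k}C_i$ with each $C_i^q=C_i$, then $C^q=\bigoplus_{i=1}^{k}C_i^q=\bigoplus_{i=1}^{k}C_i=C$, so $C$ is Galois closed.

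For the forward direction I would fix $\mathbf{c}\in C$. Since $C$ is Galois closed, $\mathbf{c}^q\in C^q=C$, and because $C$ is an $\F_{q^2}$-subspace the vector $\gamma\mathbf{c}+\gamma^q\mathbf{c}^q$ lies in $C$ for every $\gamma\in\F_{q^2}$. Each such vector is fixed by the Frobenius, since $(\gamma\mathbf{c}+\gamma^q\mathbf{c}^q)^q=\gamma^q\mathbf{c}^q+\gamma\mathbf{c}$, hence it lies in the $\F_q$-subspace $C\cap\F_q^{\ell}$. Choosing $\gamma\in\F_{q^2}\setminus\F_q$ so that $\gamma^q\neq\gamma$, set $\mathbf{a}=\mathbf{c}+\mathbf{c}^q$ and $\mathbf{b}=\gamma\mathbf{c}+\gamma^q\mathbf{c}^q$, both in $C\cap\F_q^{\ell}$; a direct computation gives $\gamma^q\mathbf{a}-\mathbf{b}=(\gamma^q-\gamma)\mathbf{c}$, so $\mathbf{c}=(\gamma^q-\gamma)^{-1}(\gamma^q\mathbf{a}-\mathbf{b})$ is an $\F_{q^2}$-combination of elements of $C\cap\F_q^{\ell}$. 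Thus $C\cap\F_q^{\ell}$ spans $C$ over $\F_{q^2}$, and since $\dim_{\F_{q^2}}C=k$ I can extract an $\F_{q^2}$-basis $\mathbf{b}_1,\dots,\mathbf{b}_k$ of $C$ with each $\mathbf{b}_i\in\F_q^{\ell}$.

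Finally I would set $C_i=\langle\mathbf{b}_i\rangle_{\F_{q^2}}$ for $1\le i\le k$. Each $C_i$ is one-dimensional, and it is Galois closed because $\mathbf{b}_i\in\F_q^{\ell}$ forces $\mathbf{b}_i^q=\mathbf{b}_i$, so $C_i^q=\langle\mathbf{b}_i^q\rangle=\langle\mathbf{b}_i\rangle=C_i$; and since the $\mathbf{b}_i$ form a basis of $C$ we get $C=C_1\oplus\cdots\oplus C_k$, which is exactly the claimed decomposition. If one prefers, the preceding Proposition can be invoked to describe the spanning vectors of these $C_i$, but it is not needed for the argument.

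The whole thing is Galois descent for the quadratic extension $\F_q\subseteq\F_{q^2}$, and the only real content is the spanning step in the second paragraph — the averaging-over-the-Galois-group trick — which I expect to be the crux, though it is elementary in this rank-two setting. I would stress that the identity $\gamma^q\mathbf{a}-\mathbf{b}=(\gamma^q-\gamma)\mathbf{c}$ holds in every characteristic, so characteristic $2$ requires no separate treatment, and that the quasi-cyclic structure is irrelevant here: the statement is really one about arbitrary linear codes over $\F_{q^2}$.
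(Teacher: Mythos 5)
The paper states this proposition without any accompanying proof, so there is nothing to compare against line by line; judged on its own, your argument is correct and complete. The reverse implication is immediate from additivity and injectivity of the Frobenius, as you say, and the forward direction is a clean instance of Galois descent for the degree-two extension $\F_q\subseteq\F_{q^2}$: the elements $\gamma\mathbf{c}+\gamma^q\mathbf{c}^q$ do lie in $C$ (since $\mathbf{c}^q\in C^q=C$ and $C$ is an $\F_{q^2}$-subspace), are coordinatewise Frobenius-fixed, and the identity $\gamma^q\mathbf{a}-\mathbf{b}=(\gamma^q-\gamma)\mathbf{c}$ with $\gamma\in\F_{q^2}\setminus\F_q$ recovers $\mathbf{c}$, so $C\cap\F_q^{\ell}$ spans $C$ and a basis of $C$ inside $\F_q^{\ell}$ exists; the one-dimensional Galois-closed summands follow. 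Two remarks. First, your proof actually yields more than the stated proposition: every Galois closed code has a basis of vectors with entries in $\F_q$, i.e., in the notation of the preceding proposition one may always normalize to $\beta=1$; this is consistent with that proposition via Hilbert 90 (a spanning vector $\mathbf{v}$ of a one-dimensional Galois closed code satisfies $\mathbf{v}^q=\mu\mathbf{v}$ with $\mu^{q+1}=1$, hence $\mu=\lambda^{q-1}$ and $\lambda^{-1}\mathbf{v}\in\F_q^{\ell}$). The paper's intended route appears to be to lean on that preceding proposition to describe the one-dimensional pieces, whereas your descent argument bypasses it entirely, as you note. Second, you are right that neither characteristic $2$ nor the quasi-cyclic structure plays any role; the statement is purely about $\F_{q^2}$-linear subspaces.
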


In general $C\neq C^q$, but they have the same parameters. According to Remark~\ref{codeC^q}, if $C$ is Galois closed, then its Hermitian and Euclidean duals are exactly the same, i.e. $C^{\perp_{H}}=C^{\perp_{E}}$. Furthermore, the following theorem shows that if $C$ is Galois closed, then its Euclidean dual is also Galois closed.

\begin{theorem}
A code $C$ over $\F_{q^2}$ is Galois closed if and only if its Euclidean dual is also Galois closed.
\end{theorem}

\begin{proof}
We only prove the forward direction as the reverse direction is nearly identical. Let $x\in C$ and $y\in C^{\perp_E}$. Then
$$\langle y^q, x\rangle_E = \sum_i y_i^q x_i = \bigg(\sum_i y_i^q x_i\bigg)^q = \sum_i y_i x_i^q = \langle y, x^q\rangle_E. $$

\noindent As $y\in C^{\perp_E}$, $y^q$ is in $(C^{\perp_E})^q$. Likewise, since $x\in C$ and $C = C^q$, we have that $\langle y, x^q\rangle_E = 0$ and so $\langle y^q, x\rangle_E = 0$. Thus $y^q \in C^{\perp_E}$ , i.e., $y\in (C^{\perp_E})^q$, and so $C^{\perp_E}\subseteq (C^{\perp_E})^q$. The reverse direction follows similarly. \qed
\end{proof}

\begin{proposition}\label{GaloisClosed}
Let $S=\left\{(c_1 , ..., c_n), \left(c_1 ^q , ..., c_n ^q \right), ..., \left(c_1 ^{q^{r-1} } , ..., c_n ^{q^{r-1}} \right)\right\} \subset \F_{q^r} ^n$, and $C=\langle S \rangle$ be the respective $[n, r\geq k]$-code. Then $C$ and its Euclidean dual are Galois closed. 
\end{proposition}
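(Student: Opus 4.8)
The plan is to show directly that $C^q = C$ and then deduce the statement for the Euclidean dual from Remark~\ref{codeC^q}. First I would observe that $S$ is precisely the orbit of the single vector $\mathbf{c} = (c_1,\ldots,c_n)$ under the Frobenius map $\mathbf{x}\mapsto \mathbf{x}^q$, truncated after $r$ steps; since the map $\mathbf{x}\mapsto\mathbf{x}^q$ is $\mathbb{F}_q$-semilinear (indeed $(\mathbf{x}+\mathbf{y})^q=\mathbf{x}^q+\mathbf{y}^q$ and $(\lambda\mathbf{x})^q=\lambda^q\mathbf{x}^q$ in characteristic $p$) it permutes $\F_{q^r}$-subspaces, and in particular $C^q=\langle S^q\rangle$. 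The key computation is that $S^q = S$ as a set: applying $\mathbf{x}\mapsto\mathbf{x}^q$ to the $i$-th generator $(c_1^{q^i},\ldots,c_n^{q^i})$ gives $(c_1^{q^{i+1}},\ldots,c_n^{q^{i+1}})$, so Frobenius cyclically shifts the listed generators, with the last one $(c_1^{q^{r-1}},\ldots,c_n^{q^{r-1}})$ mapping to $(c_1^{q^r},\ldots,c_n^{q^r}) = (c_1,\ldots,c_n)$ because $x^{q^r}=x$ for every $x\in\F_{q^r}$. Hence $S^q=S$, so $C^q=\langle S^q\rangle=\langle S\rangle=C$, i.e. $C$ is Galois closed.

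Next I would handle the Euclidean dual. One clean way: the Frobenius map is a bijection of $\F_{q^r}^n$ that commutes with coordinatewise operations, and for any $\mathbf{a},\mathbf{b}\in\F_{q^r}^n$ one has $\langle \mathbf{a}^q,\mathbf{b}^q\rangle_E = \big(\sum_i a_ib_i\big)^q = \langle\mathbf{a},\mathbf{b}\rangle_E^{\,q}$, so $\langle\mathbf{a},\mathbf{b}\rangle_E=0$ if and only if $\langle\mathbf{a}^q,\mathbf{b}^q\rangle_E=0$. Therefore $\mathbf{x}\in C^{\perp_E}$ iff $\langle\mathbf{x},\mathbf{c}\rangle_E=0$ for all $\mathbf{c}\in C$ iff $\langle\mathbf{x}^q,\mathbf{c}^q\rangle_E=0$ for all $\mathbf{c}\in C$ iff $\langle\mathbf{x}^q,\mathbf{c}'\rangle_E=0$ for all $\mathbf{c}'\in C^q=C$ iff $\mathbf{x}^q\in C^{\perp_E}$. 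This shows $(C^{\perp_E})^q = C^{\perp_E}$, i.e. $C^{\perp_E}$ is Galois closed, completing the proof.

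I do not anticipate a serious obstacle here; the only point requiring a little care is making sure the argument that $C^q$ is again a subspace (not merely a subset) is invoked correctly — this is exactly the observation recalled before the definition of Galois closed, valid because $q$ is a power of the characteristic of $\F_{q^r}$ — and making sure the indexing of the Frobenius orbit wraps around correctly using $x^{q^r}=x$ on $\F_{q^r}$. Everything else is a direct computation with the semilinearity of Frobenius and the Galois-equivariance of the Euclidean pairing.
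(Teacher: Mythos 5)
Your proof is correct and follows essentially the same route as the paper's: Frobenius cyclically permutes the generating set $S$ (the wrap-around using $x^{q^r}=x$ on $\F_{q^r}$), hence fixes $C$, and the Galois-equivariance of the Euclidean pairing, $\langle \mathbf{a}^q,\mathbf{b}^q\rangle_E=\langle\mathbf{a},\mathbf{b}\rangle_E^{\,q}$, transfers this to $C^{\perp_E}$. The only difference is cosmetic — you obtain $C^q=C$ and $(C^{\perp_E})^q=C^{\perp_E}$ directly from the bijectivity of Frobenius, whereas the paper proves a single inclusion in each case and concludes by comparing dimensions.
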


\begin{proof}
Without loss of generality, let $B=\{(c_1 , ..., c_n), (c_1 ^q , ..., c_n ^q ), ..., (c_1 ^{q^{k-1} } , ..., c_n ^{q^{k-1}} )\}=\{\mathbf{c_0} ,\mathbf{ c_1 }, ...,\mathbf{c_{k-1}}\}$ be a basis of $C$. Given $\mathbf{c_{\gamma}}=\sum_{i=0} ^{k-1} \gamma_i \mathbf{c_i} $, where $\gamma_i \in \mathbb{F}_{q^r}$ for $0 \leq i \leq k-1$, we have 
\begin{eqnarray}
\mathbf{c_{\gamma}}^q &=& (\sum_{i=0} ^{k-1} \gamma_i \mathbf{c_i} )^q = \sum_{i=0} ^{k-1} \gamma_i ^q \mathbf{c_{i+1}} = \sum_{i=0} ^{k-2} \gamma_i ^q \mathbf{c_{i+1}} + \gamma_{k-1}^q \sum_{i=0} ^{k-1} \beta_i \mathbf{c_{i}}, \mbox{ where } \beta_ i \in \F_{q^r}, i = 0,...,k-1;\nonumber \\
&=& \gamma_{k-1}^q \beta_0 \mathbf{c_0} + \sum_{i=1} ^{k-1} (\gamma_{i-1} ^q +\beta_i) \mathbf{c_{i}} \in C,
\end{eqnarray}
namely, $C^q \subset C$. 
Since $C$ and $C^q$ have the same dimension; therefore, $C$ is Galois closed.

Finally, it is enough to prove that $C^{\perp_E}\subset (C^q)^{\perp_E}$, since $(C^q)^{\perp_E} \mbox{ and } C^{\perp_E}$ have the same dimension. Let $\mathbf{x}=(x_1 ,...,x_n) \in C^{\perp_E}$. Therefore, 
\begin{equation}
\langle\mathbf{x} ^q ,\mathbf{c}_{\gamma}\rangle_E = \sum_{i=0} ^{k-1}  \langle \mathbf{x}^q, \gamma_i \mathbf{c_i} \rangle_E =\left( \sum_{i=0} ^{k-1}  \langle \mathbf{x}, \gamma_ i ^{q^{r-1}} \mathbf{c_{i-1}} \rangle_E \right)^q = 0
\end{equation}
where $\mathbf{c_{-1}}$ is written as linear combination of $\mathbf{c_{0}},...,\mathbf{c_{k-1}}$. \qed
\end{proof}

The Trace Representation~\cite[Theorem 5.1]{lingsole} of $\ell$-QC codes depends on the constituent codes and $q$-cyclotomic cosets. It turns out that the the constituent codes of a QC code being Galois closed directly determines whether $\C \subset \F_q ^{m\ell}$ is Galois closed.

\begin{theorem}\label{GC-iff-constituents-GC}
An $\ell$-QC code $\C \subset \F_{q^2}^{m\ell}$ is Galois closed if and only if its respective constituent codes are Galois closed.
\end{theorem}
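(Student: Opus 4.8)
The plan is to leverage the trace representation (Theorem~\ref{theoremlingsole}) together with the observation that the Frobenius map $\mathbf{c}\mapsto\mathbf{c}^q$ commutes with the CRT decomposition, so that raising a QC codeword to the $q$-th power corresponds to raising each of its constituent-code components to the $q$-th power (after a possible permutation of the cyclotomic cosets, which must be handled with care). First I would fix notation: write $\C\cong\bigoplus_{i}C_i$ for the CRT decomposition of $\C$ into its constituent codes over the various extension fields $\F_{q^2}(\alpha^{u_i})$, and recall from the remark preceding the statement that the Frobenius map respects the quasi-cyclic structure and that the irreducible factors of $x^m-1$ can be aligned for the decompositions of both $\C$ and $\C^q$.

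The key computation is to track what $\mathbf{c}\mapsto\mathbf{c}^q$ does at the level of the trace formula~\eqref{Traceformula}. Applying the $q$-th power (a field automorphism of any extension of $\F_{q^2}$, but note the base field here is $\F_{q^2}$, so $q$ is \emph{not} trivial on the base — this is the subtlety) to $c_g=\sum_i\operatorname{Tr}_{F_i/\F_{q^2}}(\mathbf{x_i}\alpha^{-gu_i})+\cdots$ and using that the trace map satisfies $\operatorname{Tr}(a)^q=\operatorname{Tr}(a^q)$ when $q$ generates the Galois group appropriately — actually over $\F_{q^2}$ we only get $\operatorname{Tr}_{F_i/\F_{q^2}}(a)^{q^2}=\operatorname{Tr}_{F_i/\F_{q^2}}(a^{q^2})$, so I would instead exploit that the entries $c_g$ lie in $\F_{q^2}$ and analyze $c_g^q$ directly — one sees that the component of $\C^q$ in the slot indexed by $u_i$ is governed by the code $C_{i'}^{q}$ where $i'$ is the index whose cyclotomic coset is obtained by multiplying $U_i$ by an appropriate power of the Frobenius. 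Because $x^m-1$ over $\F_{q^2}$ has its factors permuted by this action in a way compatible with the reciprocal-pairing, and because we have aligned the factorizations, the upshot is: $\phi(\C^q)\cong\bigoplus_i C_i^{q}$ in the CRT decomposition, slot by slot. Granting this, the theorem is immediate: $\C=\C^q$ as subsets of $\F_{q^2}^{m\ell}$ if and only if $\phi(\C)=\phi(\C^q)$ (since $\phi$ is an $R$-module isomorphism and the Frobenius commutes with $\phi$), which by uniqueness of the CRT decomposition holds if and only if $C_i=C_i^q$ for every $i$, i.e., every constituent code is Galois closed.

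The two directions then split cleanly. For ($\Rightarrow$): if $\C$ is Galois closed, then $\bigoplus_i C_i=\phi(\C)=\phi(\C^q)=\bigoplus_i C_i^q$, and comparing the $i$-th summands (each living in a distinct $\F_{q^2}$-algebra factor of $R_{q^2,m}^\ell$) forces $C_i=C_i^q$ for all $i$. For ($\Leftarrow$): if $C_i=C_i^q$ for every $i$, reassemble via the trace representation to conclude $\C=\C^q$. I expect the main obstacle to be making the slot-by-slot identification $\phi(\C^q)\cong\bigoplus_i C_i^q$ genuinely precise — specifically, checking that the Frobenius permutation of the idempotents/cyclotomic cosets in the CRT decomposition of $R_{q^2,m}$ matches the permutation it induces on the constituent codes, so that "Galois closed" is really being asserted within each fixed factor rather than across a Frobenius orbit of factors. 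This is where the alignment of the factorizations of $x^m-1$ for $\C$ and $\C^q$ (flagged in the remark before the statement) does the real work, and I would spend most of the proof justifying that the map $\phi$ intertwines the Frobenius on $\F_{q^2}^{m\ell}$ with the coordinate-wise Frobenius on $R_{q^2,m}^\ell$, after which everything reduces to the linear-algebra fact that a direct sum of subspaces in distinct ring components equals another such direct sum iff the components agree.
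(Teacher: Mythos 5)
Your proposal takes essentially the same route as the paper's own proof: push the Frobenius through the CRT decomposition (equivalently, the trace representation), identify each constituent of $\C^q$ with a $q$-th power of a constituent of $\C$, and compare slot by slot; the paper compresses this into the single assertion that, after "aligning" the factorizations, $(C^q)_i = C_i^q$. One small reassurance: the trace identity you hesitate over is not actually an issue, since $\operatorname{Tr}_{F_i/\F_{q^2}}(a)^q=\operatorname{Tr}_{F_i/\F_{q^2}}(a^q)$ holds because $x\mapsto x^q$ commutes with every $q^2$-power Frobenius appearing in the trace sum.

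The obstacle you flag at the end, however, is genuine and is exactly where the paper's one-sentence justification is thinnest, so your instinct to spend most of the proof there is correct. Evaluating at a root $\alpha^{u_i}$ of the $i$-th factor gives $c_j^{[q]}(\alpha^{u_i})=\bigl(c_j(\alpha^{u_i q^{-1}})\bigr)^q$, so the constituent of $\C^q$ in the slot of the $q^2$-cyclotomic coset $U_i$ is a Frobenius power of the constituent of $\C$ in the slot of $q^{-1}U_i$. Multiplication by $q$ permutes the $q^2$-cyclotomic cosets mod $m$, and this permutation $\sigma$ need not be trivial: for $m=5$ over $\F_9$ it swaps $\{1,4\}$ and $\{2,3\}$. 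When $\sigma$ is nontrivial, $\C=\C^q$ couples the constituents along each $\sigma$-orbit (roughly $C_{\sigma(j)}=C_j^q$, up to the choice of coset representatives) rather than constraining each one individually, and this is strictly weaker than every $C_j$ being Galois closed. Concretely, with $m=5$, $q=3$, $\ell=2$, take $C_1=\langle(1,\beta)\rangle$ with $\beta\in\F_9\setminus\F_3$, so $C_1^{q^2}=C_1$ but $C_1^q\neq C_1$, and put $C_2:=\langle(1,\beta^3)\rangle$ in the swapped slot; one checks the resulting $\C$ satisfies $\C=\C^q$ while neither $C_1$ nor $C_2$ is Galois closed. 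So the "alignment" step cannot be dispatched as bookkeeping: to complete your proposal (and the paper's proof) one must either add the hypothesis that multiplication by $q$ stabilizes every $q^2$-cyclotomic coset mod $m$ (true in the paper's examples, e.g. $m=7$ over $\F_4$), or restate the equivalence in terms of $\sigma$-orbits of constituents.
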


\begin{proof}

Let $\mathcal{C}$ be an $\ell$-QC code with decomposition in~\eqref{lingsole-decomposition1}. Then
\begin{equation}
\mathcal{C}^q \cong \Bigg(\bigoplus_{j=1}^t ({(C^q)}{'}_j \oplus {{(C^q)}'_j}^{\perp_E})\Bigg) \oplus \bigg(\bigoplus_{i=1} ^s (C^q)_i \bigg).
\end{equation}
As the corresponding constituent codes of $\mathcal{C}^q$ and $\mathcal{C}$ are aligned, we have that $(C^q)'_j=C'^q_j$ and ${(C^q)'_j}^{\perp_E}=((C'_j)^{\perp_E})^q$ for all $1 \leq j \leq t$, and $(C^q)_i=C_i^q$ for each $1 \leq i \leq s$. Thus $\mathcal{C}=\mathcal{C}^q$ as codes if and only if $(C^q)_i=C_i^q=C_i$ for each $1 \leq i \leq s$, and $(C^q)'_j=C'^q_j=C'_j$ and ${(C^q)'_j}^{\perp_E}=({C'_j}^{\perp_E})^q={C'_j}^{\perp_E}$ for all $1 \leq j \leq t$. This can also be seen via the Trace Representation~\cite[Theorem 5.1]{lingsole} in conjunction with properties of the trace map. \qed
\end{proof}

We use these results to obtain the following family of $q^2-$ary QC codes that satisfy the square-root-like bound for their minimum distances.

\begin{theorem}\label{theorem:Galois_Hermitian}
Let $R_{q^2,m}^{\ell}$ be as in \eqref{eq:new_r_decomp}. Let $\mathcal{C}_{u}$ be the $m^u\ell $-length QC code whose constituent codes in ~\eqref{first_2t_general_level1} and~\eqref{last_s_general_level1} are all Galois closed and whose minimum distances satisfy~\eqref{chaincond}. Additionally, assume that the constituent codes in~\eqref{last_s_general_level1} are HSO. Then we have an infinite family $\{\mathcal{C}_u\}_{u\geq 1}$ of HSO $q^2$-ary QC Galois closed codes where $d(\C_u) \geq m^{u-1} d_{1,s}$ for each $u\geq 1$. Moreover,
\begin{itemize}
    \item[(i)] If $d(C_{1,t+1}) \ge c\sqrt{\ell}$ for some constant $c$, then all codes in $\{\C_u\}_{u\geq 2}$ satisfy the square-root-like bound for their minimum distances.
    \item[(ii)] Otherwise, there exist a positive integer $L >1 $ so that the all codes in the subfamily $\{\C_u\}_{u\geq L}$ of $\{\C_u\}_{u\geq 1}$ satisfies the square-root-like bound for their minimum distances.
\end{itemize}
\end{theorem}
\begin{remark}Analogous results to those of Theorem~\ref{theorem:Galois_Hermitian} also hold when the constituent codes in \eqref{last_s_general_level1} are HSD provided that $x^m-1$ has only $x-1$ as its self-reciprocal factor.
\end{remark}



\color{black}

\section{Constructions of QECCs from QC Codes}\label{QECC_constructions}

Given the construction of $\ell$-QC codes and results seen in~\cite{cem,lingsole}, along with the collection of quantum constructions in Subsection~\ref{quantumconstructions}, we present in this section a construction of quantum codes with good parameters.

\begin{example}\label{example_optimal}
Let $q=4$, $\ell=3$, $m=7$, and $\alpha$ be a primitive $7$-th root of unity. Let
\begin{eqnarray*}\label{CRT_decomposition_R^3}
R_{4,7}^3 &\cong& \left(\frac{\mathbb{F}_4 [x]}{\langle x^3 +x +1 \rangle} \right)^3 \oplus\left(\frac{\mathbb{F}_4 [x]}{\langle x^3 +x^2 +1 \rangle}\right)^3\oplus\left(\frac{\mathbb{F}_4 [x]}{\langle x -1 \rangle}\right)^3 \nonumber \\
&\cong & \mathbb{F}_4 ^3(\alpha) \oplus \mathbb{F}_4 ^3 (\alpha^3 ) \oplus \mathbb{F}_4 ^3.
\end{eqnarray*}

For any element $\gamma \in \F_4 (\alpha)\setminus \F_4$, let $C_{1,1} ' = \left\langle (\gamma,\gamma,\gamma)\right\rangle \subset \F_{4} ^3 (\alpha)$ , $C_{1,1}'' = (C_{1,1}')^{\perp_E} \subset \F_{4} ^3 (\alpha^3)$, and $C_{1,1} = \F_4 ^3$ be the constituent codes, whose parameters are $[3,1,3]_{64}$, $[3,2,2]_{64}$, and $[3,3,1]_4$, respectively. Notice that $C_{1,1} ', C_{1,1}''$ are Galois closed by Proposition~\ref{GaloisClosed} and $C_{1,1}$ is trivially HDC and Galois closed.

\color{black}
Let $\C \subset \F_4 ^{m\ell}$ be the respective $3$-QC code which, by Theorem~\ref{GC-iff-constituents-GC}, is also Galois-closed. By Lemma~\ref{dimensionlemma} and Theorem~\ref{lingsole}, $\C$ is a $[21,12,7]_4$-HDC (and also EDC) QC code, where the minimum distance was computed using \textsc{Magma}.  By Corollary~\ref{quantumcodefromdualcontaining}, it is possible to obtain a $[[21,3,d \geq 7]]_4$-quantum code $Q$ from the QC code $\mathcal{C}$.  According to~\cite{table}, the best-known possible minimum distance for a quaternary QECC with $n=21$ and $k=3$ is $7$.


\begin{center}
 \begin{tabular}{lccccc}
 \cline{2-6}
 \multicolumn{1}{l|}{}                                                                                                          & \multicolumn{1}{c|}{$[[16,8]]_4$} & \multicolumn{1}{c|}{$[[17,7]]_4$} & \multicolumn{1}{c|}{$[[18,6]]_4$} & \multicolumn{1}{c|}{$[[19,5]]_4$} & \multicolumn{1}{c|}{$[[20,4]]_4$} \\ \hline
 \multicolumn{1}{|l|}{Ex.~\ref{example_optimal} and Lemma~\ref{qeecc1} - item (2)} & \multicolumn{1}{c|}{$\geq 2$}   & \multicolumn{1}{c|}{$\geq 3$}   & \multicolumn{1}{c|}{$\geq 4$}   & \multicolumn{1}{c|}{$\geq 5$}   & \multicolumn{1}{c|}{$\geq 6$}   \\ \hline
 \multicolumn{1}{|l|}{Code Table~\cite{table}}                                                            & \multicolumn{1}{c|}{4-5}          & \multicolumn{1}{c|}{5-6}          & \multicolumn{1}{c|}{5-7}          & \multicolumn{1}{c|}{6-8}        & \multicolumn{1}{c|}{7-8}          \\ \hline
 & \multicolumn{1}{l}{}            & \multicolumn{1}{l}{}            & \multicolumn{1}{l}{}            & \multicolumn{1}{l}{}            & \multicolumn{1}{l}{}            \\ \cline{2-6} 
 \multicolumn{1}{l|}{}                                                                                                          & \multicolumn{1}{c|}{$[[22,3]]_4$} & \multicolumn{1}{c|}{$[[23,3]]_4$} & \multicolumn{1}{c|}{$[[24,3]]_4$} & \multicolumn{1}{c|}{$[[25,3]]_4$} & \multicolumn{1}{c|}{$[[26,3]]_4$} \\ \hline
 \multicolumn{1}{|l|}{Ex.~\ref{example_optimal} and Lemma~\ref{qeecc1} - item (1)} & \multicolumn{1}{c|}{$\geq 7$}   & \multicolumn{1}{c|}{$\geq 7$}   & \multicolumn{1}{c|}{$\geq 7$}   & \multicolumn{1}{c|}{$\geq 7$}   & \multicolumn{1}{c|}{$\geq 7$}   \\ \hline
 \multicolumn{1}{|l|}{Code Table~\cite{table}}                                                            & \multicolumn{1}{c|}{7-10}        & \multicolumn{1}{c|}{8-11}        & \multicolumn{1}{c|}{8-11}        & \multicolumn{1}{c|}{9-11}        & \multicolumn{1}{c|}{9-12}        \\ \hline
 \end{tabular}
\captionsetup{hypcap=false}
\captionof{table}{Parameters of good QECCs from our code construction \label{Table:More_good_QECCs}}
\end{center}

By Lemma~\ref{qeecc1}, it is still possible to construct more good QECCs from the code $\C$ which are listed in Table~\ref{Table:More_good_QECCs}. Although there is no significant improvement in the minimum distances for any of the above parameters, most of our codes achieve values that are within one or two of the best-known lower bounds. It is also worth noting that our example exhibits lower bounds for the minimum distance, so it may be possible to obtain codes with larger minimum distances.
\end{example}

\begin{example}
Still working with the decomposition of $R_{4,7} ^3$ provided in Example~\ref{example_optimal}, assume the same constituent codes $C_{1,1} '$ and $C_{1,1} ^{''}$ but let $C_{1,1} = \langle (0,1,\beta), (1,0,\beta^2)\rangle$ be the last constituent code, where $\beta \in \F_4$ so that $\beta^2 = \beta+1$. The code $C_{1,1}$ parameters $[3,2,2]_{4}$ and is EDC.

Let $\C \subset \F_4 ^{m\ell}$ be the respective $3$-QC code. By Corollary~\ref{cormainTh}, we have an infinity family $\{\C_u\}_{u\geq 1}$ of EDC QC codes whose their parameters are 
\begin{equation*}
 {\left[3\cdot 7^u , \frac{3\cdot 7^u +1}{2}, \geq 2\cdot 7^{u-1}\right]_4.}
\end{equation*}
From Corollary~\ref{quantumcodefromdualcontaining}, it is possible to obtain the infinity family of quantum codes $\{Q_u\}_{u\geq 1}$ with parameters \linebreak {$\left[\left[3\cdot 7^u, 1, \geq 2\cdot 7^{u-1}\right]\right]_4$.}
\end{example}
\color{black}

\section{Conclusion}
In this work, we introduced a recursive construction of infinite families of quasi-cyclic codes whose parameters are determined by their constituent codes. By carefully selecting these constituents, we obtained families that are self-orthogonal, dual-containing, or self-dual with respect to the Euclidean and Hermitian inner products. A key feature of our construction is that both the dimension and a lower bound for the minimum distance can be explicitly determined from the parameters of the initial constituent codes, allowing us to study how the parameters evolve throughout the family. In particular, we showed that the resulting codes satisfy a square-root-like lower bound on the minimum distance which ensures a nontrivial distance growth as the code length increases. Moreover, we can use the CSS construction to establish the existence of quantum error-correcting codes with good parameters that arise from our families. 

For future work, we aim to gain a deeper understanding of how using MDS codes as constituent codes affects the minimum distance of the respective QC Code. We also plan to explore sharper lower bounds for the minimum distance to improve the parameters of both the resulting QC code and its respective quantum code. Additionally, another plan is to relax the restriction in Section~\ref{section:dimension} that the constituent codes $C_{u, i}$ are given by $m^{u-1}$-copies of $C_{1,i}$. While this assumption simplified the analysis, allowing more general choices of $C_{u,i}$ may yield families with improved parameters. In each of these future directions, we aim to find the existence of quantum codes with good parameters.

\section{Acknowledgements}
The authors would like to thank the Institute for Computational and Experimental Research in Mathematics (ICERM) for providing funding and space for us to work in a mathematically rich environment during our week-long visit in Summer 2025. The first two authors would also like to thank Simons Laufer Mathematical Sciences Institute (SLMath) for providing funding for post-programmatic travel in 2024 following our Summer Research in Mathematics visit in 2023. The first author was supported by project FAPEMIG RED-00133-21. The second author was partially supported by the AMS-Simons Research Enhancement Grants for PUI 
Faculty while working on this project.  The three authors also thank Dr. Lara Ismert for her conversations that inspired us to use Galois closed codes to move freely between the Hermitian and Euclidean inner products and Dr. Markus Grassl for reading the first version of this manuscript and pointing out some inconsistencies.

\printcredits

\bibliographystyle{plain}

\bibliography{FFA_Submission}

\end{document}